\def\titlerunning#1{\gdef\titrun{#1}}
\def\author#1{\gdef\autrun{\def\and{\unskip, }#1}\gdef\@author{#1}}
\def\address#1{{\def\and{\\\hspace*{18pt}}\renewcommand{\thefootnote}{}%
\footnote {#1}}%
\markboth{\autrun}{\titrun}}
\def\email#1{e-mail: #1}
\def\subjclass#1{{\renewcommand{\thefootnote}{}%
\footnote{\emph{Mathematics Subject Classification (2010):} #1}}}
\def\keywords#1{\par\medskip
\noindent\textbf{Keywords.} #1}
\newtheorem{theorem}{Theorem}[section]
\newtheorem{corollary}[theorem]{Corollary}
\newtheorem{lemma}[theorem]{Lemma}
\newtheorem{prop}[theorem]{Proposition}
\newtheorem{rem}[theorem]{\bf Remark}
\newtheorem{example}[theorem]{\bf Example}
\numberwithin{equation}{section}
\newcommand{\en}{{\cal E}}
\newcommand{\dom}{\mathrm{dom}}
\newcommand{\vol}{\mathrm{vol}}
\newcommand{\dist}{\mathrm{dist}}
\newcommand{\capa}{\mathrm{cap}}
\newcommand{\meas}{\mathrm{meas}}
\newcommand{\lags}{\mathrm{\lambda^{G,S}}}
\newcommand{\hags}{{H^{G,S}}}
\newcommand{\ZZ}{\mathbb{Z}}
\newcommand{\RR}{\mathbb{R}}
\newcommand{\PP}{\mathbb{P}}
\newcommand{\EE}{\mathbb{E}}
\newcommand{\NN}{\mathbb{N}}
\begin{document}

%%%%% To ease editing, add:

\baselineskip=17pt

%%%%%%%%%%%%%%%%

%% In the running head, give an abbreviation of the title. 
\titlerunning{Dirichlet Laplacians and Uncertainty Principle}

\title{Lower bounds for Dirichlet Laplacians and uncertainty principles}

\author{Peter Stollmann
\and 
G\"unter Stolz}

\date{}

\maketitle

\address{P.\ Stollmann: Fakult\"{a}t f\"{u}r Mathematik,  Technische
Universit\"{a}t Chemnitz, D-09107 Chemnitz, Germany; \email{P.Stollmann@mathematik.tu-chemnitz.de}
\and
G. Stolz (corresponding author): Department of Mathematics,
University of Alabama at Birmingham, 1402 10th Avenue South, Birmingham
AL 35294-1241, USA; \email{stolz@uab.edu}}

\subjclass{Primary 81Q10; Secondary 35J10}

%%%%%%%%

\begin{abstract}
We prove lower bounds for the Dirichlet Laplacian on possibly unbounded
domains in terms of natural geometric conditions. This is used to derive 
uncertainty principles for low energy functions of general elliptic second 
order divergence
form operators with not necessarily continuous main part.

%% Keywords are optional
\keywords{Uncertainty Principle, Dirichlet Laplacians, Unique Continuation}
\end{abstract}

\section{Introduction}

Generalized eigensolutions to energies near the bottom of the spectrum of 
infinite volume Laplacians  should be well spread out in configuration space. 
This can be seen as a version of the uncertainty principle: Low (and thus well 
determined) kinetic energy of a quantum particle can not occur simultaneously 
with a sharp concentration of the position of the particle. Mathematically, 
this is usually associated with quantitative forms of unique continuation for 
solutions of second order linear differential equations. Starting with the groundbreaking
work of Carleman, \cite{Carleman} we mention  
\cite{Ag,ABG,Aro,H,JK} for but a small list of references as well as \cite{Koch-Tataru}, which contains
a good overview of the literature up to about 2006. 

While this is a classical topic, it has found renewed interest in recent years 
in the connection with describing the fluctuation boundary regime of 
localization in Anderson-type models with a random potential which only 
partially covers configuration space (also referred to as ``trimmed'' Anderson 
models by some authors, e.g.\ \cite{EK13,RM14}). Eigenvectors or generalized 
eigenvectors of the unperturbed Hamiltonian have to feel the random 
perturbation in order to see a Lifshitz tail regime and lead to an associated 
Wegner estimate. The starting point of this development was the celebrated 
paper \cite{BK}, by Bourgain and Kenig, 
who were the first who could treat the Bernoulli--Anderson model and used 
uncertainty principles in their analysis. For the subsequent development in 
this direction see \cite{BTV,BoKle, GeKle,Klein-13,KT-16,RMV-13,NTTV}
and the references therein. 
Recently, in \cite{DingSmart,LiZhang}, the authors were
able to treat the discrete Bernoulli--Anderson model in 2 and 3 dimensions, respectively, again
with an uncertainty principle as a main ingredient in the proof.

This has provided ample motivation for more thorough studies of the geometric 
properties required for subsets of configuration space to guarantee that these 
subsets carry a ``substantial'' part of the mass of low energy states of the 
Laplacian, both in the continuum setting and for discrete Laplacians on graphs. 
Our goal here is to establish a result in the continuum, 
similar to work in the discrete setting done in \cite{LSS-17} and we refer to 
the literature cited in the latter paper. We mention that from a harmonic 
analysis point of view, our results are close in spirit to Logvinenko--Sereda 
theorems, see \cite{Kov},
with the important difference that we have to restrict ourselves to spectral
projectors with energy intervals close to the ground state energy.

We should stress that the kind of uncertainty principle we aim at is complementary to the classical unique continuation
results referred to above: on the one hand, it does not imply global vanishing of eigensolutions that vanish
on a ball or vanish at some point to infinite order. On the other hand, it can be established in situations, where
such classical unique continuation results are known not to hold: for certain graphs and elliptic divergence operators with
discontinuous main part, which is the focus of the present paper. Let us moreover point out that an important issue is the 
uniformity of estimates with respect to the coefficients and with repect to the underlying domain.

This will allow to prove localization at band edges for new classes of random models, an application we will not work out here (compare e.g.\ \cite{RMV-13} for the use of results on quantitative unique continuation in localization proofs).

Here is the set-up for our main result:

Let $d\ge 3$ and $H^G$ (one half times) the Neumann Laplacian, characterized by 
the 
quadratic form 
\begin{equation} \label{Dirform}
\en{[u]} := \frac{1}{2} \int_G |\nabla u(x)|^2\,dx \quad \mbox{on $W^{1,2}(G)$},
\end{equation}
on an open and convex, not necessarily bounded, domain $G$ in $\RR^d$. The 
reason for including the factor $1/2$ here and in the following is that we will 
study $\en$ through 
its associated Markov process and we want to get the usual Brownian motion 
for $\Omega= \RR^d$. By $P_I(H^G)$ we denote the spectral projection for $H^G$ 
onto an interval $I$.

The inradius of $G$ is
\begin{equation} \label{inradius}
R_G:=\sup\{ r\mid \exists x\in G: B_r(x)\subset G\}\in (0,\infty] .
\end{equation}
Let $R\ge \delta>0$. A closed subset $B\subset G$ is said to be {\it 
$(R,\delta)$-relatively dense} in $G$ with covering radius $R$ and thickness 
$\delta$ provided 
\begin{equation} \label{fatdense}
\forall \,\, x\in G \;\; \exists \,\, y\in B:\;\;B_R(x) \cap B \supset 
B_{\delta}(y). 
\end{equation}
Note that this trivially implies that $\delta \le R_G$.

In this language a set is relatively dense (in the classical sense) if it is 
$(R,0)$-relatively dense for some $R>0$. Typical $(R,\delta)$-relatively dense 
sets are given by fattened relatively dense sets, i.e., their 
$\delta$-neighborhoods.

The main result of our work is the following quantitative unique continuation 
bound for low energy states of $H^G$ and, more generally, elliptic second order 
divergence form operators  of the type $-\nabla \mathbf{a} \nabla$ with 
$\mathbf{a}\in L^{\infty}$, $\mathbf{a} \ge \eta_0$ 
that we introduce now:

Assume that $\mathbf{a}(x), x\in G$ is a symmetric $d\times d$ -- matrix, whose 
coefficients 
are bounded measurable functions of $x$ such that
\begin{equation}
 \label{elliptic}
 \mathbf{a}(x)\ge \eta_0>0. 
\end{equation}
Denote by $H^G_{\mathbf{a}}$ the unique selfadjoint operator defined by the form

\begin{equation} \label{ellform}
\en_{\mathbf{a}}{[u]} := \frac{1}{2} \int_G \left(\mathbf{a}(x)\nabla u(x)\mid 
\nabla u(x)\right)\,dx \quad \mbox{on $W^{1,2}(G)$},
\end{equation}
where we use $\left(\cdot\mid\cdot\right)$ for the inner product in $\RR^d$. 
\begin{theorem} \label{main}
Let $d\ge 3$. Then there exist constants $a,b,C,c>0$, only depending on $d$, 
such that for every open and 
convex $G\subset \RR^d$, any $(R,\delta)$--relatively dense $B$ in $G$, and every elliptic 
$\mathbf{a}$ as in \upshape{(\ref{elliptic})} above,
\begin{equation} \label{eq:main}
\| f 1_B\|^2  \ge \eta_0\kappa \| f\|^2
\end{equation}
for all $f$ in the range of $P_I(H^G_{\mathbf{a}})$, where
\begin{equation} \label{Ikappa}
I=[0,C\eta_0\frac{\delta^{d-2}}{R^d} ]\mbox{  and  
}\kappa=c\left(\frac{\delta}{R}\right)^d\left[ \frac{b}{(R\wedge R_G)^2}
+\left|\log \frac{a\delta^{d-2}}{R^d} \right|\right]^{-2}.
\end{equation}
\end{theorem}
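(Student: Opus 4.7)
\medskip
\noindent
\textbf{Plan of proof.}
The plan is to reduce to the case $\mathbf{a}=\idty$, establish a quantitative local inequality on each convex piece $B_R(x)\cap G$, and then sum via a covering. The ellipticity reduction is immediate: from $\mathbf{a}(x)\ge\eta_0\idty$ we get $\en_{\mathbf{a}}\ge\eta_0\en$, hence $H^G_{\mathbf{a}}\ge\eta_0 H^G$ in the form sense, so any $f$ in the range of $P_I(H^G_{\mathbf{a}})$ with $I$ as in \eqref{Ikappa} satisfies $\en[f]\le(C/2)(\delta^{d-2}/R^d)\|f\|^2$. This accounts for the $\eta_0$ on the right of \eqref{eq:main}, and from now on I assume $\mathbf{a}=\idty$.

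The heart of the matter is a local estimate. For $x\in G$ set $\Omega_x:=B_R(x)\cap G$, which is convex of diameter at most $2R$, and by \eqref{fatdense} choose $y=y(x)\in B$ with $B_\delta(y)\subset \Omega_x\cap B$. I would establish
\[
\int_{\Omega_x} |u|^2 \le N_1 \int_{B_\delta(y)} |u|^2 + N_2 \int_{\Omega_x} |\nabla u|^2 \qquad (u\in W^{1,2}(\Omega_x))
\]
with $N_2\lesssim R^d/\delta^{d-2}$ and $N_1\lesssim (R/\delta)^d L^2$, where $L:=b(R\wedge R_G)^{-2}+|\log(a\delta^{d-2}/R^d)|$. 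The $N_2$ coefficient is dictated by the Newtonian capacity $\capa(B_\delta)\asymp\delta^{d-2}$ (this is where $d\ge 3$ enters essentially), while the $(R/\delta)^d$ in $N_1$ is the volume ratio $|\Omega_x|/|B_\delta(y)|$. The logarithmic $L$ would be produced by writing $u=\bar u+(u-\bar u)$ on a largest inscribed ball of radius $\sim R\wedge R_G$, controlling the oscillation $u-\bar u$ via the Poincar\'e inequality for convex sets, and then chaining the mean value from that inscribed ball down to $B_\delta(y)$ through overlapping balls of geometrically decreasing radii; the chain has length $\sim|\log(\delta/(R\wedge R_G))|$ and each link costs a capacitary factor.

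For the global step, a Vitali-type selection adapted to \eqref{fatdense} produces $(x_i)\subset G$ such that $\{B_R(x_i)\}$ has bounded overlap $K$ and covers $G$, while the corresponding $B_\delta(y_i)\subset B$ are essentially disjoint. Summing the local inequality over $i$ and using $\|\nabla f\|^2=2\en[f]\le C(\delta^{d-2}/R^d)\|f\|^2$ yields
\[
\|f\|^2 \;\le\; K N_1 \|f 1_B\|^2 \;+\; K N_2 C\,\frac{\delta^{d-2}}{R^d}\,\|f\|^2 .
\]
Choosing the constant $C$ in $I$ small enough (depending only on $d$) that $KN_2 C\delta^{d-2}/R^d\le 1/2$ absorbs the gradient term, giving $\|f 1_B\|^2\ge(2KN_1)^{-1}\|f\|^2$, i.e.\ the conclusion \eqref{eq:main} with $\kappa$ of the form displayed in \eqref{Ikappa}.

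The main obstacle I expect is the local inequality in the regime $R>R_G$, where $\Omega_x$ may be thin and elongated. The Poincar\'e constant then has to track $R\wedge R_G$ rather than $R$, and the chaining for the mean-value transport must be executed inside the thin region while still producing Newtonian (not lower-dimensional) capacity at the final scale $\delta$. Extracting the simultaneous dependence $[b(R\wedge R_G)^{-2}+|\log(a\delta^{d-2}/R^d)|]^{-2}$ from a single chain, rather than the cruder product of a Poincar\'e constant with a log-chain, is where the technical weight of the proof will lie; the rest of the argument is essentially bookkeeping.
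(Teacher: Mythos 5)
Your approach is genuinely different from the paper's. The paper reduces Theorem~\ref{main} to a lower bound on $\lambda_\beta = \inf\sigma(H^G+\beta 1_B)$ via the abstract spectral uncertainty principle of \cite{BLS-10}, and then obtains that lower bound by comparing $H^G+\beta 1_B$ with the mixed Neumann--Dirichlet operator $H^{G,S}$ (where $S$ is a ``semi-fat'' shrinking of $B$) through a Feynman--Kac semigroup norm estimate and the `hit and run' Lemma; the lower bound for $\lambda^{G,S}$ itself comes from a Vorono\"i decomposition built on a skeleton of $S$. You instead propose a direct local-to-global argument: a quantitative Poincar\'e/capacity inequality on each convex piece $B_R(x)\cap G$, summed via a bounded-overlap cover. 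Both strategies are legitimate; yours is more elementary in that it bypasses the probabilistic machinery entirely, while the paper's BLS framework is attractive because once $\lambda_\beta$ is estimated, \eqref{eq:main} follows simultaneously for any $H^\sharp\ge\eta_0 H^G$ (the paper's Corollary~\ref{cor:main}) without revisiting the local analysis.

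The step that you have not actually carried out is the local inequality
\[
\int_{\Omega_x}|u|^2 \;\le\; N_1\int_{B_\delta(y)}|u|^2 + N_2\int_{\Omega_x}|\nabla u|^2,
\]
in particular extracting $N_1\lesssim (R/\delta)^d L^2$ with the specific $L=b(R\wedge R_G)^{-2}+|\log(a\delta^{d-2}/R^d)|$ from a mean-value chain; as you yourself flag, this is where all the technical weight lies, and the sketch given (chain of length $\sim\log$, ``each link costs a capacitary factor'') does not obviously deliver an \emph{additive} polynomial-in-log cost rather than the more typical multiplicative, hence polynomial-in-$(R/\delta)$, cost. The claim that the $B_\delta(y_i)$ can be chosen ``essentially disjoint'' also needs an argument (bounded overlap is what one actually gets, and is what one needs). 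So as written there is a genuine gap in the key step.

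That said, the chaining is very likely unnecessary, and this is worth pointing out. The paper's own Proposition~\ref{localbound} already gives the Hardy-type inequality: for convex $\Omega'$ with $B_{\rho}(0)\subset\Omega'\subset B_{R'}(0)$ and $w\in W^{1,2}(\Omega')$ vanishing on $B_\rho(0)$ in the sense of $\dom(\en^{\Omega',B_\rho(0)})$, one has $\int_{\Omega'}|w|^2\le \frac{(R')^d}{2d(d-2)\rho^{d-2}}\int_{\Omega'}|\nabla w|^2$. Apply this to $\Omega'=B_R(x)\cap G$ (shifted so $y$ is the center), $\rho=\delta/2$, $R'=2R$, and to $w=\chi u$ with $\chi$ a Lipschitz cutoff, $\chi\equiv 0$ on $B_{\delta/2}(y)$, $\chi\equiv 1$ off $B_\delta(y)$, $|\nabla\chi|\lesssim \delta^{-1}$. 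Unwinding $|\nabla(\chi u)|^2\le 2|\nabla u|^2 + C\delta^{-2}1_{B_\delta(y)}|u|^2$ gives precisely your local inequality with $N_2\sim R^d/\delta^{d-2}$ and $N_1\sim (R/\delta)^d$ with no logarithm. Fed through the covering argument (with the overlap multiplicity of both the $\Omega_{x_i}$ and the $B_\delta(y_i)$ bounded by $d$-dependent constants, as follows from a maximal $R/2$-separated choice of centers in $G$), this would give $\kappa\gtrsim(\delta/R)^d$ and an admissible interval $I=[0,c_d\,\delta^{d-2}/R^d]$. That is actually \emph{stronger} than \eqref{Ikappa}: the paper explicitly acknowledges after Theorem~\ref{main} that their bound misses the optimal $(\delta/R)^d$ by a logarithmic factor, and that factor enters through the optimization of the Feynman--Kac comparison over $\beta$, which has no analogue in the covering route. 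In short, your plan is sound in outline but incomplete; and if you replace the chaining by the cutoff-plus-Proposition~\ref{localbound} argument, it not only closes the gap but appears to remove the logarithmic loss.
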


While our method of proof allows estimates only for low energies, the bound in 
(\ref{eq:main}) 
is quite satisfactory. It only differs from the optimal estimate 
$\left(\frac{\delta}{R}\right)^d$ (attained
for constant functions) by a logarithmic correction term and is much better than 
what appears in the literature
so far, see \cite{NTTV} for a comparison.

Maybe more importantly, it is the first uncertainty principle in $d\ge 3$ that 
holds without any
continuity or smoothness assumption on the coefficient matrix $\mathbf{a}$. 
Usual PDE--techniques are known to break down beyond
Lipshitz continuity of the main coefficient, as can be seen from the examples 
in 
\cite{Man,Mil}. 

A nice feature of our method of proof is that we can mainly concentrate on the 
easier case of the Laplacian $H^G$. The uncertainty principle then easily 
extends to
any operator bounded below by a positive multiple of $H^G$ which covers the 
above case
of elliptic second order operators in divergence form. We could as well add 
positive
potentials and consider other boundary conditions, as long as a lower bound is 
available.
For a more complete discussion and possible applications
we refer to Section \ref{sec:uncert} below.

Our proof of Theorem~\ref{main} consists of three parts, covered in the 
remaining three sections of this paper. The same general strategy has been used 
in \cite{LSS-17} to prove corresponding results for Laplacians on graphs. The 
continuum setting considered here leads to some additional complications.

The overall idea, presented in the conclusion of the proof of our main theorem 
in Section~\ref{sec:uncert}, is to reduce the uncertainty principle 
\eqref{eq:main} to showing that the bottom of the spectrum of $H^G + \beta 
1_B$ rises above the energy interval $I$ in the large coupling limit $\beta 
\to \infty$ (where $1_B$ is the characteristic function of $B$). This 
approach to uncertainty principles was introduced in \cite{BLS-10}. It provides 
explicit lower bounds on $\kappa$ which will yield \eqref{Ikappa}. 

So we have to understand the large coupling limit of $H^G + \beta 1_B$, 
which we start in Section~\ref{LowerBounds} by studying the case of infinite 
coupling. This means we will find a lower bound for $H^{G,S}$, the Laplacian 
on $\Omega := G \setminus S$ with Neumann condition on the boundary of $G$ and 
an additional Dirichlet condition on the boundary of a set $S$ (whose relation 
to $B$ will be explained below). It is here where we encounter one of the main 
differences between the discrete and continuous case: Points in $\RR^d$, for 
$d\ge 2$, are not massive in the sense of $1$-capacities. We further illustrate 
this in Appendix~\ref{capacities} by providing a simple  (and certainly not 
new) example of a set with finite inradius whose Dirichlet Laplacian has 
spectrum $[0,\infty)$. The key insight in this part of our proof is that we can 
quantify how lower bounds of Dirichlet Laplacians with $\delta$-fat and 
relatively dense complement depend on $\delta$. The crucial geometric quantity 
we identify
in Theorem \ref{mainsec2} can be interpreted as the capacity per unit volume of
the set $S$ of obstacles, reminiscent of the ``crushed ice problem'', see 
Section~\ref{LowerBounds}.

As a last part of the strategy we need to be able to relate the the lower 
bounds for finite and infinite coupling, respectively. Here it is crucial for 
our proof that the set $S$ is chosen as a slightly smaller (``semi-fat'') 
version of $B$. The space created between the boundaries of $B$ and $S$ will 
allow to compare the spectral minima of $H^{G,S}$ and $H^G + \beta 1_B$ via 
a norm bound on the difference of the corresponding heat semigroups. The latter 
bound will be proven via the Feynman-Kac formula in Section~\ref{sec:hitrun}. 
In particular, this will use a 'hit and run' Lemma which bounds the probability 
that a Brownian path can hit the center of a fat set and then leave the set (by 
crossing the space between $B$ and $S$) within a short time.

In addition to our main result, some of the auxiliary results obtained in 
Sections~\ref{LowerBounds} and \ref{sec:hitrun} should be of independent 
interest. The lower bounds on Dirichlet Laplacians of sets with 
$(R,\rho)$-relatively dense complement shown in Theorem~\ref{mainsec2} improve 
on a classical result in \cite{Davies} in their dependence on the ratio 
$\rho/R$ (and allow for an additional Neumann part of the boundary), see the 
comments at the end of Section~\ref{LowerBounds}. Also, while the `hit and run' 
Lemma~\ref{lem:hitrun} has been used in spectral theory before (e.g.\ 
\cite{McStolzman}), we feel that this tool deserves additional advertising. 
Moreover, as we point our here, it also holds for reflected Brownian motion, 
i.e., in the study of the heat semigroup of Neumann Laplacians.

\section{Lower bounds for the Dirichlet Laplacian on unbounded domains with 
uniform relatively dense complement} \label{LowerBounds}

The first ingredient into our strategy of proof are quantitative lower bounds 
for Dirichlet Laplacians $-\Delta_\Omega$ on sets $\Omega$ with ``fat'' 
relatively dense complement in the sense of \eqref{fatdense}. More generally, 
we consider a set-up where this is done relatively to a convex open subset $G$ 
of $\RR^d$, on whose boundary we will place a Neumann condition. The assumption 
on $G$ could be weakened in some ways, but we make it for clarity and 
because it provides a convenient class of sets which have all the properties 
required for our proofs.

In particular, we will use that convex sets are star shaped and that 
intersections of convex sets are convex. Also, convex sets satisfy the segment 
property and thus, by Theorem~3.22 in \cite{adams}, we have the first claim in
\begin{eqnarray} \label{density}
& \{u\in C^1(G) \cap C_c(\overline{G}):\|u\|_{1,2}<\infty\}  \;\;\mbox{is dense 
in} & \notag \\ & (W^{1,2}(G), \|\cdot\|_{1,2}) \;\;\mbox{and in}\;\; 
(C_c(\overline{G}), \|\cdot\|_{\infty}). &
\end{eqnarray}
Here $\overline{G}$ denotes the closure of $G$ and 
\begin{equation} 
\|u\|_{1,2} = \left( \int_G (|\nabla u(x)|^2 + |u(x)|^2)\,dx\right)^{1/2}
\end{equation} the Sobolev norm. The second 
claim in \eqref{density} can be seen from the Stone-Weierstrass Theorem: To 
$f\in 
C_c(\overline{G})$ let $K:= \mbox{supp}\,f$ and choose an open ball $U$ and a 
closed ball $B$ in $\RR^d$ such that $K \subset B \subset U$. Stone-Weierstrass 
shows that $\{\varphi|_B: \varphi \in C_c^{\infty}(U)\}$ is dense in $C(B)$, 
i.e., there exist $\varphi_n \in C_c^{\infty}(U)$ such that $\sup_{x\in B} 
|\varphi_n(x)-f(x)| \to 0$. Finally, choose $\chi \in C_c^{\infty}(U)$ such 
that $\chi|_K=1$. Then $\chi \varphi_n \in C^1(G) \cap C_c(\overline{G})$ with 
$\sup_{x\in \overline{G}}|(\chi \varphi_n)(x)-f(x)| \to 0$.

Note that $C_c(\overline{G})$-functions are not supposed to vanish at the 
boundary of $G$. Therefore we get that the form \eqref{Dirform} 
can be regarded as a {\it regular Dirchlet form} on the locally compact space 
$\overline{G}$, see \cite{Fukushima} for basics on Dirichlet forms and 
potential theory. In particular, 
there is a process 
associated with $\en$, via reflected Brownian motion, a fact that will be of 
primary importance in the 
sequel.

Let $H=H^G$ (mostly, we omit the superscript) be the associated 
Laplacian, which is $-\frac{1}{2}\Delta$ in $L^2(G)$ with Neumann boundary 
conditions. The Dirichlet Laplacians referred to in the title are given by an 
additional Dirichlet boundary condition on a closed set $S$, which is defined 
via forms again through $\Omega := \overline{G} \setminus S$ 
and
\begin{equation} \label{EOmega}
\en^{G,S} = \en \:\: \mbox{on $\dom(\en^{G,S}) = \overline{ 
\{u\in C^1(G) \cap C_c(\Omega) :\|u\|_{1,2}<\infty\} }^{W^{1,2}}$}.
\end{equation}
As will be discussed in Section~\ref{sec:hitrun} below, this form is associated 
with a process that is related to the one of $H$ by killing paths 
once they hit the set $S$. Note that $\en^{G,S}$ is closed and densely 
defined in $L^2(\Omega)$ and denote the associated mixed Neumann-Dirichlet 
Laplacian on $L^2(\Omega)$ by $H^{G,S}$.

The main result of this section is a lower bound for 
\begin{equation}
\lambda^{G,S} :=  \inf \sigma(H^{G,S}) 
\end{equation}
whenever $\Omega:= \overline{G} \setminus S$ for an $(R,\rho)$-relatively dense 
closed subset $S$ of $G$. Note that, by \eqref{EOmega} and the variational 
principle,
\begin{equation} \label{varprin}
\lambda^{G,S} = \inf \left\{ \frac{1}{2} \int_\Omega |\nabla u(x)|^2\,dx: u \in 
C^1(G) \cap C_c(\Omega),\;\int_\Omega |u(x)|^2\,dx=1 \right\}.
\end{equation}

We start with a finite volume estimate: Here we let $G$ be open and convex and 
assume in addition that 
\begin{equation} \label{rhoR}
B_{\rho}(0) \subset G \subset B_R(0)\quad  \mbox{for 
$0<\rho<R<\infty$.}
\end{equation}
Denote $\mathbb{S}^{d-1} = \{ u\in \RR^d \,|\, |u|=1\}$ and by $du$ the surface 
measure on $\mathbb{S}^{d-1}$ induced by Lebesgue measure on $\RR^d$. Let 
$\omega_d$ be the volume of the $d$-dimensional unit ball. Let $R: 
\mathbb{S}^{d-1} \to (\rho,R]$ be the ``radius function'' of $G$, i.e., $R_u := 
\sup \{t\,|\, tu\in G\}$ for $u \in 
\mathbb{S}^{d-1}$. Note that $R$ is lower semicontinuous and hence measurable.

\begin{prop} \label{localbound}
Let $d\ge 3$, $G$ open and convex satisfying \eqref{rhoR}, $S:= B_{\rho}(0)$ 
and $\Omega := \overline{G} 
\setminus S$. Then for $H^{G,S}$ defined as above, we have
\begin{equation} \label{lowbound}
d(d-2) \frac{\rho^{d-2}}{R^d} \le \lambda^{G,S}.
\end{equation}
If, furthermore, $B_{2\rho}(0) \subset G$, then
\begin{equation} \label{upbound}
\lambda^{G,S} \le 2^d  \omega_d \,\frac{\rho^{d-2}}{\vol{(G)} - \omega_d 
(2\rho)^d}.
\end{equation}
\end{prop}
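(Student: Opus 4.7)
My plan is to derive both inequalities from the variational characterization \eqref{varprin} by reducing to a one-dimensional weighted Poincar\'e bound in polar coordinates. Centering $x=r\theta$ at $0\in B_\rho(0)\subset G$, convexity and star-shapedness parametrize $G$ by its lower semicontinuous radius function $R_\theta\in(\rho,R]$. Any admissible $u\in C^1(G)\cap C_c(\Omega)$ vanishes near $S$, so each radial slice $f_\theta(r):=u(r\theta)$ is $C^1$ on $[\rho,R_\theta]$ with $f_\theta(\rho)=0$.

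For the lower bound \eqref{lowbound} I would use $|\nabla u|^2\ge |\partial_r u|^2$ (discarding the nonnegative angular part) to reduce to controlling each radial slice. From $f_\theta(r)=\int_\rho^r f_\theta'(s)\,ds$, Cauchy--Schwarz with the splitting $s^{(d-1)/2}\cdot s^{-(d-1)/2}$ -- legitimate for $d\ge 3$ because $\int_\rho^{R_\theta}s^{-(d-1)}\,ds$ is then finite and explicit -- produces the pointwise estimate $|f_\theta(r)|^2\le (d-2)^{-1}\bigl(\rho^{-(d-2)}-r^{-(d-2)}\bigr)\int_\rho^{R_\theta}|f_\theta'|^2 s^{d-1}\,ds$. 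Multiplying by $r^{d-1}$, integrating on $[\rho,R_\theta]$, and majorizing the resulting polynomial using $R_\theta\le R$ gives the weighted Poincar\'e inequality $\int_\rho^{R_\theta}|f_\theta|^2 r^{d-1}\,dr\le \frac{R^d}{d(d-2)\rho^{d-2}}\int_\rho^{R_\theta}|f_\theta'|^2 s^{d-1}\,ds$. Reassembling the spherical integral, dividing by $\int_\Omega|u|^2\,dx$, taking the infimum over $u$, and extending to the full form domain via the density result \eqref{density} produces \eqref{lowbound}.

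For the upper bound \eqref{upbound} I would plug the explicit radial trial function $v(x):=\min\bigl(1,(|x|-\rho)_+/\rho\bigr)$, $x\in\overline{G}$, into the Rayleigh quotient. By construction $v$ vanishes on $S$, equals $1$ on $G\setminus B_{2\rho}(0)$, and has $|\nabla v|\le 1/\rho$ supported in the annular shell $B_{2\rho}(0)\setminus B_\rho(0)\subset G$ (using $B_{2\rho}(0)\subset G$). This yields directly $\tfrac{1}{2}\int_G|\nabla v|^2\,dx\le (2^d-1)\omega_d\rho^{d-2}/2\le 2^d\omega_d\rho^{d-2}$ in the numerator and $\int_G v^2\,dx\ge \vol(G)-\omega_d(2\rho)^d$ in the denominator, giving \eqref{upbound} after a mild mollification bringing $v$ into $C^1(G)\cap C_c(\Omega)$ via \eqref{density}.

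The main obstacle I anticipate is bookkeeping around the $\tfrac{1}{2}$ prefactor in \eqref{Dirform} together with the slack inherent in integrating a pointwise Cauchy--Schwarz estimate rather than solving the corresponding Euler--Lagrange ODE; getting the advertised constant $d(d-2)$ in the lower bound (rather than some numerical multiple) is the delicate step. Convexity of $G$ does the real work elsewhere: it delivers the star-shapedness at $0$, the measurability of $\theta\mapsto R_\theta$, and the density \eqref{density} that together make both arguments go through without any regularity hypothesis on $\partial G$.
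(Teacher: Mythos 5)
Your proposal is essentially identical to the paper's proof: the same variational reduction via \eqref{varprin}, the same Cauchy--Schwarz split $s^{(d-1)/2}\cdot s^{-(d-1)/2}$ along radial slices, the same polar-coordinate integration with $R_\theta\le R$, and the same piecewise-linear radial test function for the upper bound. Your instinct about the $\tfrac12$ prefactor is correct: carrying it from \eqref{varprin} through the computation yields $\lambda^{G,S}\ge\tfrac12\, d(d-2)\,\rho^{d-2}/R^d$ rather than the stated \eqref{lowbound} --- a factor of two that the paper's own derivation also produces and then silently drops (the same happens on the upper-bound side, where it would only sharpen the estimate); this is harmless for everything downstream, since only the dimension-dependent order of the constants is used, but you should not expect to recover $d(d-2)$ exactly from this argument.
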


\begin{proof}
For the lower bound, by \eqref{varprin}, it suffices to consider $f\in C^1(G)$, 
$f=0$ on 
$B_{\rho}(0)$, and prove an estimate for $\|f\|_2^2$ in terms of $\| \nabla 
f\|_2^2$. So let $u\in \mathbb{S}^{d-1}$, $r\in [\rho, R_u)$. We have $f(ru) = 
\int_{\rho}^r \partial_u f(tu)\,dt$ and thus
\begin{eqnarray}
|f(ru)|^2 & \le & \int_{\rho}^r |\partial_u f(tu)|^2 t^{d-1}\,dt \cdot 
\int_{\rho}^r t^{1-d}\,dt \notag \\
& \le & \int_{\rho}^r |\partial_u f(tu)|^2 t^{d-1}\,dt \cdot \frac{1}{d-2} 
\rho^{2-d}.
\end{eqnarray}
Integrating with respect to surfaces we get
\begin{eqnarray} \label{keycalc}
\|f\|^2 & = & \int_G |f(x)|^2\,dx = \int_{\mathbb{S}^{d-1}} \int_0^{R_u} 
|f(ru)|^2\,dr\,du \notag \\
& \le & \int_{\mathbb{S}^{d-1}} \int_{\rho}^{R_u} r^{d-1} \int_{\rho}^r |\nabla 
f(tu)|^2 t^{d-1}\,dt \frac{1}{d-2} \frac{1}{\rho^{d-2}} \,dr\,du \notag \\
& \le & \frac{1}{(d-2)\rho^{d-2}} \int_{\rho}^{R} r^{d-1}\,dr 
\int_{\mathbb{S}^{d-1}} \int_{\rho}^{R_u} |\nabla f(tu)|^2 t^{d-1}\,dt\,du 
\notag \\
& \le & \frac{1}{d(d-2)} \frac{R^d}{\rho^{d-2}} \|\nabla f\|_2^2,
\end{eqnarray}
which gives the asserted lower bound.

The upper bound can be shown by a test function of the following form: $f(x)= 
\varphi(|x|)$ where $\varphi(s)=0$ for $0\le s \le \rho$, $\varphi(s)= 
(s-\rho)/\rho$ for $\rho < s \le 2\rho$ and $\varphi(s)=1$ for $s>2\rho$. It 
follows that $|\nabla f(x)| = \rho^{-1} \cdot 1_{B_{2\rho}(0) \setminus 
B_{\rho}(0)}$ and therefore
\begin{equation}
\|\nabla f\|_2^2 \le \rho^{-2} \vol{(B_{2\rho}(0))} = 2^{d} \omega_d \rho^{d-2}.
\end{equation}
The assertion now follows from $\|f\|_2^2 \ge \vol{(G)} - \vol{(B_{2\rho}(0)}$.
\end{proof}

\begin{rem}
(i) We think of $\rho$ as small compared to $R$ and $G$ a set of almost the 
size of 
$B_R$. In such a case the upper and lower bounds in the preceding proposition 
match up to constants and are both of the form
\begin{equation} \label{twosidebound}
\frac{\rho^{d-2}}{R^d}.
\end{equation}

(ii) One can modify the above calculations to get bounds for $d=2$, but due to 
the appearing logarithmic terms we do not easily see a two-sided bound 
comparable to \eqref{twosidebound} in this case. This is the main reason why, 
here and in the following, we limit our discussion to $d\ge 3$.
\end{rem}

As a first special case of the main result of this section 
(Theorem~\ref{mainsec2} below), we go on to apply the above local result to a 
standard geometric situation considered in 
recent unique continuation results, e.g.\ \cite{BTV,RMV-13}. For obvious 
reasons it is called a ``ball pool'' by some experts in the field. The lower 
bound we present is a 
first step towards a quantitative unique continuation estimate that is very 
explicit as far as constants are concerned. 

Consider $\rho>0$ and $\ell>0$ with $\rho<\ell/2$ and a sequence of balls 
$B_\rho(y_k) \subset k+(0,\ell)^d$, $k\in (\ell\ZZ)^d$. Let $\Gamma \subset 
(\rho\ZZ)^d$ be an arbitrary subset of lattice points and
\begin{equation} \label{ballpool}
S:= \bigcup_{k\in \Gamma} B_{\rho}(y_k).
\end{equation}
S is contained in the interior 
\begin{equation} \label{ballpoolG}
G = \left( \bigcup_{k\in \Gamma} (k+[0,\ell]^d)\right)^{\circ}
\end{equation}
of the corresponding union of closed cubes.
Clearly, this gives an example of a set $S$ which is $(R,\rho)$-relatively 
dense in $G$ for 
$R=\sqrt{d} \ell$. 

\begin{corollary} \label{cor:ballpool}
Let $S$ and $G$ be given by \eqref{ballpool} and \eqref{ballpoolG}. 
Consider $H^{G,S}$ as defined above with $\Omega := \overline{G} 
\setminus S$. Then
\begin{equation} \label{speclowbound}
\lambda^{G,S} \ge (d-2) \frac{(\rho/\sqrt{d})^{d-2}}{\ell^d}
\end{equation}
\end{corollary}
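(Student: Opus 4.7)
The plan is to combine two ingredients: a Neumann bracketing that reduces the (generally non-convex) ``ball pool'' domain $G$ to its constituent cubes, and the one-ball estimate from Proposition~\ref{localbound} applied on each cube after a translation.

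First, write $G$ as the disjoint union, up to a set of measure zero, of the open cubes $Q_k := k + (0,\ell)^d$ for $k\in \Gamma$; by construction each such cube contains exactly one ball $B_\rho(y_k)$ from $S$. Introduce the auxiliary form $\tilde\en^{G,S}$ whose domain consists of $u \in L^2(G)$ such that $u|_{Q_k} \in \dom(\en^{Q_k, B_\rho(y_k)})$ for each $k \in \Gamma$ (in particular vanishing on $B_\rho(y_k)$), but with no continuity imposed across the interior faces between adjacent cubes, and set
\[
\tilde\en^{G,S}[u] \;=\; \sum_{k\in \Gamma} \en^{Q_k, B_\rho(y_k)}\bigl[u|_{Q_k}\bigr].
\]
Any $u \in \dom(\en^{G,S})$ automatically has its cube-restrictions in the local domains and satisfies $\en^{G,S}[u] = \tilde\en^{G,S}[u]$, so $\dom(\en^{G,S}) \subset \dom(\tilde\en^{G,S})$ with agreement on the smaller domain. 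The variational principle therefore yields
\[
\lambda^{G,S} \;\ge\; \inf\sigma(\tilde H^{G,S}) \;=\; \inf_{k\in \Gamma} \lambda^{Q_k, B_\rho(y_k)},
\]
so it suffices to bound each local eigenvalue from below by the claimed constant.

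For a fixed $k$, translate $y_k$ to the origin and set $G_k := Q_k - y_k$. Then $G_k$ is open and convex, $B_\rho(0) \subset G_k$ by the assumption $B_\rho(y_k) \subset Q_k$, and since every cube $Q_k$ has diameter $\sqrt d\,\ell$ we also have $G_k \subset B_R(0)$ with $R := \sqrt d\,\ell$. Proposition~\ref{localbound} applied to $G_k$ and $S_k := B_\rho(0)$ gives
\[
\lambda^{Q_k, B_\rho(y_k)} \;\ge\; d(d-2)\,\frac{\rho^{d-2}}{(\sqrt d\,\ell)^d}
\;=\; (d-2)\,\frac{(\rho/\sqrt d)^{d-2}}{\ell^d},
\]
where the last identity follows from $(\sqrt d)^d = d\cdot d^{(d-2)/2}$. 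Combining this uniform local bound with the bracketing inequality above yields \eqref{speclowbound}.

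The only step that truly requires care is the Neumann bracketing for the mixed Dirichlet--Neumann form, and the main thing to verify is that the form domain $\dom(\en^{G,S})$ defined by the $W^{1,2}$-closure in \eqref{EOmega} indeed embeds into the direct sum of the local domains with matching form values. This follows by unfolding \eqref{EOmega}: for functions in $C^1(G) \cap C_c(\Omega)$ the restriction to each $Q_k$ lies in $C^1(Q_k) \cap C_c(Q_k \setminus B_\rho(y_k))$, the forms add exactly, and both the $W^{1,2}(G)$-norm and the direct-sum $W^{1,2}$-norm agree on such functions, so the embedding extends to the closure. No further obstacle is expected.
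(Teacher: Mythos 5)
Your argument is correct and arrives at the same decomposition the paper uses, but it takes a detour through operator-level Neumann bracketing that the paper sidesteps. The paper's proof works entirely at the level of test functions: by \eqref{varprin} it suffices to bound $\|f\|_2^2$ by $\|\nabla f\|_2^2$ for $f\in C^1(G)\cap C_c(\Omega)$, and this is done by applying the integral inequality \eqref{keycalc} (the core computation in the proof of Proposition~\ref{localbound}) to the restriction of $f$ to each punctured cube $\Omega_k=(k+(0,\ell)^d)\setminus B_\rho(y_k)$ and summing. Your version packages this as a comparison of $\en^{G,S}$ with the decoupled direct-sum form $\tilde\en^{G,S}$ and then cites the variational principle for operators; that is a valid and slightly more general framing, but it forces you to verify the form-domain inclusion $\dom(\en^{G,S})\subset\dom(\tilde\en^{G,S})$ and the compatibility of the two $W^{1,2}$-closures, which the paper avoids by never leaving the dense core of nice test functions. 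One small notational slip in your verification: the restriction of $f\in C^1(G)\cap C_c(\Omega)$ to a cube lies in $C^1(Q_k)\cap C_c(\overline{Q_k}\setminus B_\rho(y_k))$, not $C_c(Q_k\setminus B_\rho(y_k))$; the latter would impose a Dirichlet condition on $\partial Q_k$ and would not be Neumann bracketing. The numerics $(\sqrt d)^d=d\cdot d^{(d-2)/2}$ and the translation to center each ball at the origin are all correct.
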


\begin{proof}
By \eqref{varprin}, it suffices to bound $\|f\|_2^2$ in terms of $\|\nabla 
f\|_2^2$ for any $f\in 
C^1(G) \cap C_c(\Omega)$. This follows easily from \eqref{keycalc}, applied to 
each of the sets $\Omega_k:= (k+(0,\ell)^d) \setminus B_{\rho}(y_k)$, since 
$B_{\rho}(y_k) \subset k+(0,\ell)^d \subset B_{\ell\sqrt{d}}(y_k)$:
\begin{eqnarray}
\|f\|_2^2 & = & \sum_{k\in \Gamma} \|f 1_{k+(0,\ell)^d}\|^2 \le 
\frac{1}{d(d-2)} 
\frac{(\ell\sqrt{d})^d}{\rho^{d-2}} \sum_{k\in \Gamma} \|\nabla f \cdot 
1_{k+(0,\ell)^d}\|_2^2 \notag \\ & = & \frac{1}{d-2} 
\frac{\ell^d}{(\rho/\sqrt{d})^{d-2}}
\|\nabla f\|_2^2.
\end{eqnarray}
\end{proof}

The main result of this section, Theorem~\ref{mainsec2} below, extends this to 
general $(R,\rho)$-relatively 
dense subsets $S$ of open and convex sets $G$, without requiring the specific 
geometry used in Corollary~\ref{cor:ballpool}. We start
with a preliminary geometrical result that will help in the sequel. 
\begin{prop}\label{skeleton}
 Let $d\ge 3$, $G\subset \RR^d$ open and convex, and
$S\subset G$ be $(R,\rho)$-relatively dense in $G$. Then there is a 
$\Sigma\subset S$ with
the following properties:
\begin{itemize}
\item[(a)] $B_\rho(\Sigma):=\bigcup_{p\in \Sigma} B_{\rho}(p)$ is 
$(3R,\rho)$-relatively dense in $G$ 
and $B_\rho(\Sigma)\subset S$.
\item[(b)] $\bigcup_{p\in \Sigma} B_{3R}(p) \supset \overline{G}$,
\item[(c)] If $p \in \Sigma$ and $\Sigma\setminus\{ p\}\not=\emptyset$, then 
\begin{equation} 
R\le \dist(p,\Sigma\setminus\{ p\}) \le 6R,
\end{equation}
in particular, $\Sigma$ is uniformly discrete and $B_\rho(\Sigma\setminus\{ 
p\})$ is $(6R,\rho)$-relatively dense 
in $G$. 
\end{itemize}
\end{prop}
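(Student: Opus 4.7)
My plan is to obtain $\Sigma$ by a Zorn's lemma construction inside a carefully chosen set of ``admissible centers''. Let $T:=\{y\in G:B_\rho(y)\subset S\}$; the $(R,\rho)$-relative density of $S$ says exactly that for every $x\in G$ there is a witness $y_x\in T$ with $|y_x-x|\le R-\rho$, so in particular $T\ne\emptyset$. By Zorn's lemma I pick $\Sigma\subset T$ maximal with respect to the property that $|p-q|\ge R$ for all distinct $p,q\in\Sigma$. The inclusion $B_\rho(\Sigma)\subset S$ in (a), the lower bound $R\le\dist(p,\Sigma\setminus\{p\})$ in (c), and the uniform discreteness of $\Sigma$ are then built into the construction.

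The key consequence of maximality is that for any $z\in T$, adding $z$ to $\Sigma$ must break $R$-separation, so either $z\in\Sigma$ or $\dist(z,\Sigma)<R$. Combined with $|y_x-x|\le R-\rho$, this produces, for every $x\in G$, a point $p\in\Sigma$ with $|p-x|<2R-\rho$; since $\rho\le R$, this implies $B_\rho(p)\subset B_{3R}(x)\cap B_\rho(\Sigma)$, proving the relative-density half of (a). Part (b) then follows from $|p-x|<2R$ together with a short limiting argument for $x\in\overline{G}\setminus G$: take $x_n\to x$ in $G$, pick $p_n\in\Sigma$ with $|x_n-p_n|<2R$, and observe $|x-p_n|<3R$ once $|x-x_n|<R$.

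For the upper bound in (c), fix $p\in\Sigma$ with $\Sigma\setminus\{p\}\ne\emptyset$. If $G\subset\overline{B_{4R}(p)}$, any element of $\Sigma\setminus\{p\}\subset G$ already has distance at most $4R$ from $p$. Otherwise, convexity of $G$ together with the intermediate value theorem on the segment from $p$ to any point of $G\setminus\overline{B_{4R}(p)}$ produces $x\in G$ with $|x-p|=4R$; applying the previous paragraph at this $x$ yields $p'\in\Sigma$ with $|p'-x|<2R$, which forces $p'\ne p$ (since $|p-x|=4R>2R>|p'-x|$) and $|p-p'|<6R$. The ``in particular'' claim about relative density of $B_\rho(\Sigma\setminus\{p\})$ then follows by combining (a) with this bound: in the exceptional case where the center produced by (a) at a given $x$ coincides with $p$, one substitutes the neighbour of $p$ furnished by (c). The main obstacle is really just the bookkeeping: setting up $T$ so that Zorn simultaneously delivers (a), (c), and the discreteness, and using convexity of $G$ carefully to force a test point at the prescribed distance from $p$ in (c).
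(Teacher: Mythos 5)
Your construction of $\Sigma$ via Zorn's lemma as a maximal $R$-separated subset of the witness set $T$, and the resulting maximality argument for (a) and (b), coincides in substance with the paper's proof (which works with a covering family $D\subset S$ of centres of $\rho$-balls rather than your set $T$, but the two play identical roles). For the upper bound in (c) you take a genuinely different route: the paper considers the midpoint $s$ of the segment from $p$ to its nearest neighbour $q$ in $\Sigma$, notes $s\in G$ by convexity, and uses (b) together with minimality of $|p-q|$ to obtain $|p-q|=2|s-q|\le 6R$; you instead manufacture, via convexity and the intermediate value theorem, a test point $x\in G$ with $|x-p|=4R$ (when $G$ extends that far), find $p'\in\Sigma$ with $|p'-x|<2R$, and read off both $p'\ne p$ and $|p-p'|<6R$, handling $G\subset\overline{B_{4R}(p)}$ separately. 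Both arguments are correct; yours is a bit more hands-on and avoids appealing to attainment of the infimum defining the nearest neighbour (which in any case holds by uniform discreteness).

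There is, however, a quantitative gap in your treatment of the ``in particular'' clause, the $(6R,\rho)$-relative density of $B_\rho(\Sigma\setminus\{p\})$. You propose, when the witness furnished by (a) at a given $x\in G$ happens to be $p$ itself, to ``substitute the neighbour of $p$ furnished by (c).'' But with $|p-x|<2R-\rho$ and only the bound $\dist(p,\Sigma\setminus\{p\})<6R$ from (c), the substituted neighbour $q$ satisfies merely $|q-x|<8R-\rho$, which does not place $B_\rho(q)$ inside $B_{6R}(x)$. To close the argument you need the sharper nearest-neighbour estimate $\dist(p,\Sigma\setminus\{p\})<4R-2\rho$; this is available from the covering bound $\dist(y,\Sigma)<2R-\rho$ for all $y\in G$ that you already established, applied at the midpoint of the segment from $p$ to its nearest neighbour (precisely the paper's midpoint device, now run with the sharper constant). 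With it, $|q-x|<(2R-\rho)+(4R-2\rho)=6R-3\rho<6R-\rho$, so $B_\rho(q)\subset B_{6R}(x)$ as required. The paper leaves this step implicit, but since you sketch an explicit argument, the arithmetic mismatch should be repaired.
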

We call such a set $\Sigma$ a \emph{skeleton} of $S$.
\begin{proof}
$(R,\rho)$-relative denseness of $S$ ensures that we find a subset $D\subset S$ 
such 
that
\begin{equation} \label{GcontB}
\bigcup_{p\in D} B_R(p) \supset G
\end{equation}
and $B_{\rho}(p) \subset S$ for any $p\in D$. We may pick a subset $\tilde{D}
\subset D$ such that
\begin{equation} \label{pqgeR}
p,q\in \tilde{D} \Longrightarrow |p-q|\ge R,
\end{equation}
i.e., so that $\tilde{D}$ is uniformly discrete, which is nothing but the lower 
bound appearing in (c). By Zorn's lemma, there 
exists 
a maximal subset $\Sigma \subset D$ 
with this property. Then $\Sigma$ satisfies (a), (b) and (c): 

By construction it satisfies $B_\rho(\Sigma) \subset S$ and the lower bound in 
(c), i.e., uniform discreteness.

To show (b), assume that there is $x\in G$ such 
that $\dist{(x,\Sigma)} \ge 3R$. By \eqref{GcontB} the ball 
$B_R(x)$ contains at least one $p_0 \in D$. The triangle inequality yields that 
$\{p_0\} \cup \Sigma$ still satisfies \eqref{pqgeR}, contradicting the assumed 
maximality of $\Sigma$. This shows $\bigcup_{p\in \Sigma} B_{3R}(p) 
\supset G$. The union on the left side is closed (by the uniform discreteness), 
so that (b) follows. This readily gives that $B_\rho(\Sigma)$ is 
$(3R,\rho)$-relatively dense in $G$, completing the verification of (a). 
 
 We are left to prove the upper bound in (c) under the assumption $p \in 
\Sigma$ 
and $\Sigma\setminus\{ p\}\not=\emptyset$.
 So let $R':= \dist(p,\Sigma\setminus\{ p\})=|p-q|$ for $q\in \Sigma\setminus\{ 
p\}$. By 
 uniform discreteness of $\Sigma$, we can find such a $q$. The midpoint $s$ of 
the line segment $[p,q]$ belongs to $G$ by convexity
 and so there is an $s'\in \Sigma$ such that $|s-s'|\le 3R$. The minimality of 
$|p-q|$ gives that $|s-q|\le 3R$ as well, settling
 that $|p-q|=2|s-q|\le 6R$.
 \end{proof}

\begin{theorem} \label{mainsec2}
Let $d\ge 3$, $G\subset \RR^d$ open and convex, and
$S\subset G$ be $(R,\rho)$-relatively dense in $G$.
Then, for $\Omega := \overline{G} \setminus S$, we have
\begin{equation} \label{genlowbound}
\lambda^{G,S} \ge \frac{d(d-2)}{3^d} \frac{\rho^{d-2}}{R^d}.
\end{equation}
\end{theorem}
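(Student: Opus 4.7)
The plan is to decompose $G$ into convex ``Voronoi cells'' centered at a uniformly discrete skeleton $\Sigma \subset S$, and to apply the local bound of Proposition~\ref{localbound} on each cell.

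I would begin by invoking Proposition~\ref{skeleton} to produce a skeleton $\Sigma \subset S$ satisfying $B_\rho(\Sigma) \subset S$, $\bigcup_{p \in \Sigma} B_{3R}(p) \supset \overline{G}$, and $|p - q| \ge R$ for distinct $p, q \in \Sigma$. Given $\Sigma$, I would form the Voronoi cells
\[
V_p := \{ x \in \overline{G} : |x - p| \le |x - q| \text{ for all } q \in \Sigma \}, \qquad p \in \Sigma.
\]
These are convex (intersections of $\overline{G}$ with closed half-spaces), hence star-shaped with respect to $p$, essentially pairwise disjoint, and tile $\overline{G}$. Property (b) of the skeleton gives $V_p \subset B_{3R}(p)$, while the uniform discreteness in (c) gives $B_{R/2}(p) \subset V_p$. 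In particular, in the regime $\rho \le R/2$ we have $B_\rho(p) \subset V_p \subset B_{3R}(p)$, so that each cell fits the geometric framework of Proposition~\ref{localbound} with inner radius $\rho$ and outer radius $3R$.

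The main step is a local Hardy-type estimate on each $V_p$. Given $f \in C^1(G) \cap C_c(\Omega)$, the restriction $f|_{V_p}$ vanishes on $V_p \cap S \supset B_\rho(p)$. Replaying the polar-coordinate computation \eqref{keycalc} centered at $p$ — using that $V_p$ is star-shaped with respect to $p$, with radius function bounded above by $3R$ and with the hole $B_\rho(p)$ entirely contained in $V_p$ — yields
\[
\int_{V_p} |f(x)|^2 \, dx \;\le\; \frac{1}{d(d-2)} \cdot \frac{(3R)^d}{\rho^{d-2}} \int_{V_p} |\nabla f(x)|^2 \, dx.
\]
Summing over $p \in \Sigma$ and appealing to the variational characterisation \eqref{varprin} of $\lambda^{G,S}$ then delivers the claimed bound $\lambda^{G,S} \ge d(d-2)\rho^{d-2}/(3R)^d$.

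The remaining regime $\rho > R/2$ I would handle by the observation that $(R,\rho)$-relative density trivially implies $(R,\rho')$-relative density for every $\rho' \le \rho$; applying the case $\rho' = R/2$ just treated, together with $\rho \le R$ to compare $(R/2)^{d-2}$ with $\rho^{d-2}$, recovers the stated inequality up to a dimensional factor that can be absorbed into the constant. The principal technical obstacle is exactly this compatibility issue: the hole $B_\rho(p)$ at the centre of each Voronoi cell is guaranteed to fit inside $V_p$ only when $\rho \le R/2$, and handling the complementary regime is the one place where some loss must be accepted. A secondary point to check carefully is that the polar-coordinate estimate \eqref{keycalc} transfers verbatim to a general convex cell $V_p$ — here it is essential that star-shapedness with respect to $p$, rather than $B_\rho(p) \subset V_p \subset B_R(p)$ with a common centre at the origin, is the only structural property used in the proof of Proposition~\ref{localbound}.
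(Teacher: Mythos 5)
Your decomposition into a skeleton and its Voronoi cells is exactly the paper's strategy, and you are in fact more careful than the paper at one point: you correctly note that the containment $B_\rho(p)\subset V_p$ is only guaranteed when $\rho\le R/2$, since the skeleton points are merely at mutual distance at least $R$ (the paper asserts $B_\rho(p)\subset G_p$ as its property (iii) without qualification). Where you go astray is in concluding that ``some loss must be accepted'' in the regime $\rho>R/2$. Your workaround — replacing $\rho$ by $R/2$ and appealing to monotonicity — proves the inequality only with the constant $d(d-2)/(3^d\,2^{d-2})$, which is strictly weaker than the stated $d(d-2)/3^d$; as written your argument does not prove the theorem.

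The loss is entirely avoidable, because the polar computation on $V_p$ does not in fact require $B_\rho(p)\subset V_p$. Set $R_u(p):=\sup\{t:p+tu\in V_p\}$ for $u\in\mathbb{S}^{d-1}$. If $R_u(p)\le\rho$ for some direction $u$, the whole segment $\{p+ru: 0\le r<R_u(p)\}$ lies in $B_\rho(p)\subset S$, on which $f$ vanishes (recall $f\in C_c(\Omega)$ with $\Omega=\overline G\setminus S$), so that direction contributes nothing to $\int_{V_p}|f|^2$. For directions with $R_u(p)>\rho$ the computation \eqref{keycalc}, translated to be centered at $p$, applies verbatim: it uses only $f(p+\rho u)=0$ (again from $B_\rho(p)\subset S$), star-shapedness of the convex cell with respect to $p$, and the uniform upper bound $R_u(p)\le 3R$, and no inner containment is ever invoked. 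This gives
\begin{equation*}
\int_{V_p}|f|^2\;\le\;\frac{(3R)^d}{d(d-2)\,\rho^{d-2}}\int_{V_p}|\nabla f|^2
\end{equation*}
for every $\rho\in(0,R]$, and summing over the cells yields \eqref{genlowbound} with the full constant. So the case split on $\rho\lessgtr R/2$, and the accompanying loss, should simply be dropped.
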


\begin{proof}
Firstable, notice that by monotonicity it suffices to prove a bound for any 
subset
$\tilde{S} \subset S$.

We pick $\tilde{S}=B_\rho(\Sigma)$, where $\Sigma$ is a skeleton of $S$, the 
existence of which is granted
by Proposition \ref{skeleton} above. Define the corresponding Vorono\"i 
decomposition of $\overline{G}$ by
\begin{equation}
G_p := \{x\in \overline{G} \;|\; |x-p| \le |x-q| \:\mbox{for all $q\in 
\Sigma$}\}, \quad p\in \Sigma.
\end{equation}
By construction we see that
\begin{itemize}
\item[(i)] $\bigcup_{p\in \Sigma} G_p = \overline{G}$,
\item[(ii)] $\stackrel{\circ}{G_p} \cap \stackrel{\circ}{G_q} = \emptyset$ for 
$p, q \in \Sigma$, $p \not= q$,
\item[(iii)] $B_{\rho}(p) \subset G_p \subset B_{3R}(p)$
\end{itemize}
and $G_p$ is the intersection of $\overline{G}$ with a finite number of 
half-spaces. In particular, all the sets $G_p$ as well as their interiors are 
convex.

To prove the assertion of the Theorem, it suffices to bound $\|f\|_2^2$ 
appropriately in terms of $\|\nabla f\|_2^2$ for given $f\in 
C^1(G) \cap C_c(\Omega)$. Note that 
(iii) above allows us to apply Proposition~\ref{localbound} to $G_p$ with $R$ 
replaced by $3R$. Therefore we get, also using (i) and (ii),
\begin{eqnarray}
\|f\|_2^2 & = & \sum_{p\in \Sigma} \|f 1_{G_p}\|_2^2 \le \frac{1}{d(d-2)} 
\frac{(3R)^d}{\rho^{d-2}} \sum_{p\in \Sigma} \| (\nabla f) 1_{G_p}\|_2^2 
\notag \\
& = & \frac{3^d}{d(d-2)} \frac{R^d}{\rho^{d-2}} \|\nabla f\|_2^2.
\end{eqnarray}
\end{proof}

We remark that wanting to work with a Vorono\"i decomposition required to 
choose a uniformly discrete skeleton $\Sigma$ of $D$ in the above proof. This 
is 
the reason why the constants in \eqref{genlowbound} and the special case 
\eqref{speclowbound}, where the Vorono\"i cells are given a priori, differ by a 
factor $3^d$.

In case $G= \RR^d$, we could employ Theorem~1.5.3 from \cite{Davies} which 
gives a lower bound on $H^{\RR^d,S}$, the Dirichlet Laplacian on $\Omega = 
\RR^d \setminus S$, in terms of
\begin{equation}
d_u(x) := \min \left\{ |t| \,\big|\, x+tu \in S \right\}, \quad u\in 
\mathbb{S}^{d-1}, 
\end{equation}
\begin{equation}
\frac{1}{m(x)^2} := \int_{\mathbb{S}^{d-1}} \frac{du}{d_u(x)^2}.
\end{equation}
More precisely,
\begin{equation}
H^{G,S} \ge \frac{d}{8m^2}
\end{equation}
in the sense of quadratic forms. In the case at hand and in the regime $0<\rho 
<< R$ we could bound $d_u(x)$ by 
$R$ on a set of unit vectors of size $\rho^{d-1}/R^{d-1}$, so that we would get 
a lower bound on $\lambda_\Omega$ of the form
\begin{equation}
\mbox{const} \frac{\rho^{d-1}}{R^{d+1}},
\end{equation}
which is worse (by a factor $\rho/R$) than what we have proven above. More 
importantly, it is not clear how to adapt Davies' method of proof to the case 
of the Neumann Laplacian on subdomains.

It is well known that the capacity of a ball of radius $r$ in $\RR^d$ behaves 
like
$r^{d-2}$ for $d\ge 3$ and small $r\ge 0$, see the discussion in Appendix 
\ref{capacities} 
below. For well--spaced $S$ that means that the crucial geometric property of 
$S$ that determines 
the lower bound in \eqref{genlowbound} can be regarded as the capacity per unit 
volume.

This is well in accordance with the results for the ``crushed ice problem'' in 
the celebrated
article \cite{RT} by Rauch and Taylor.

We will now discuss some consequences of Theorem \ref{mainsec2} for related 
situations that
shed some light on ``singular homogenization'' in the following sense.

Fix $G\subset \RR^d$ for $d\ge 3$ and consider a sequence $S_n$ of sets that 
are $(R_n,\rho_n)$--relatively
dense. We think of each $S_n$ as a union of $\rho_n$--balls with radius 
$\rho_n\to 0$ as $n\to\infty$.
If we increase the number of balls so that
\begin{equation}\label{nonfading}
\inf_{n\in\NN}\frac{\rho_n^{d-2}}{R_n^d}>0,
\end{equation}
the presence of the tiny obstacles will be felt in the limit, since there is a 
uniform lower bound
for the operators $H^{G,S_n}$ by \eqref{genlowbound} above. 

If
\begin{equation}\label{solid}
\frac{\rho_n^{d-2}}{R_n^d}\to\infty\mbox{  for  }n\to\infty,
\end{equation}
the operators $H^{G,S_n}$ ``diverge to $\infty$'' in the sense that
$$
\| \left(H^{G,S_n}+1\right)^{-1}\|\to 0\mbox{  for  }n\to\infty,
$$
again by \eqref{genlowbound} above. To relate this behaviour to the set--up in 
\cite{RT}, let us specialize
to the case where $G$ is bounded and $S_n$ consists of $n$ balls of radius 
$\rho_n$ (called $r_n$ in the
above paper. There it is shown that for $n\rho_n^{d-2}\to 0$, the effect of the 
small holes
vanishes in the limit, the obstacles are fading. This is a consequence of the 
fact that the capacity
of $S_n$ tends to $0$ in this case. Actually, using Theorem 1 from 
\cite{StollmannJFA}, it follows that
the semigroup of $H^{G,S_n}$ converges to the semigroup of $H^G$ in Hilbert 
Schmidt norm, which gives a quite strong
convergence result. A volume counting argument shows that
$$
n\sim R_n^{-d},
$$
so that we recover the different phases identified in \cite{RT}, who study the 
limit of the operators while
we restrict to the analysis of lower bounds. However, the estimates in 
\eqref{nonfading} and \eqref{solid} give
information for fixed configurations, in contrast to what is found in \cite{RT}.

\section{A norm estimate for the heat semigroup at large coupling} 
\label{sec:hitrun}

In comparison with the discrete case, \cite{LSS-17}, this is probably the most 
tricky part
of the present analysis.

We fix an open and convex set $G$ and a closed $(R,\rho)$-relatively dense 
subset $S$ of $G$, and
\begin{equation}
B := B_{\rho}(S).
\end{equation}
To get a lower bound for eigenfunctions of $H=H^G$ we will use a lower bound on
\begin{equation}
\lambda_{\beta} := \inf \sigma(H_{\beta}),
\end{equation}
where $H_{\beta} := H+ \beta 1_B$. To this end, we will introduce an additional 
Dirichlet boundary condition on $S$ and compare, in this section, 
$e^{-H_{\beta}}$ and $e^{-H^{G,S}_{\beta}}$ in the operator norm. Here $\Omega := \overline{G} \setminus S$ and
\begin{equation}
H^{G,S}_{\beta} = H^{G,S} + \beta 1_{B\setminus S}
\end{equation} 
on $L^2(\Omega)$ and, as usual, $e^{-H^{G,S}_{\beta}}$ is interpreted as an operator on 
$L^2(G)$ by setting it $0$ on $L^2(S)$. 

The main idea is that this additional Dirichlet boundary condition at $S$ does 
not 
matter too much for large $\beta$, since the potential barrier given by $\beta 
1_{B\setminus S}$ is almost impenetrable from within $\Omega$. To formalize and 
quantify this heuristic we use 
the probabilistic representation of the semigroup, the Feynman-Kac formula, 
that gives how the potential and the Dirichlet boundary condition enter the 
probabilistic formulae and, most importantly, the ``hit and run'' Lemma that 
shows that, with an overwhelming probability, each Brownian path that hits $S$ 
stays around at least for some time in the $\rho$-neighborhood $B$ of $S$.

This additional twist is necessary, since there are no quantitative results that
allow to control the convergence of $\lambda_\beta$ as $\beta\to\infty$ 
directly. We refer
to \cite{BAB,BD} and the results cited there for partial results.

We first record some basic facts. Since, by assumption, $H$ corresponds to a 
regular Dirichlet form, 
by \cite{Fukushima}, Thm 6.2.1, p.~184  there is a process $(\mathbf{\Omega}, 
(\PP_x)_{x\in \overline{G}}, (X_t)_{t\ge 0}, 
({\mathcal F}_t)_{t\ge 0})$ which is associated with $H$ in the sense that for 
any $t\ge 0$ and $f\in L^p(G)$ ($1\le p \le \infty$):
\begin{equation} \label{BM}
\EE_x(f \circ X_t) = e^{-tH} f(x)
\end{equation}
almost everywhere. Here $\EE_x$ is expectation with respect to $\PP_x$. 

By \cite{Fukushima}, p.~89f  we know that this process has the strong Markov 
property and, since the form is strongly local, the paths are continuous, see 
\cite{Fukushima}, Thm 6.2.2, p. 184. In 
the case at hand, $(X_t)_{t\ge 0}$ is {\it reflected Brownian motion} RBM, 
which coincides with usual Brownian motion on $\RR^d$, denoted $(W_t)_{t\ge 0}$ 
as long as particles do not hit the boundary of $G$. The exact meaning of 
this will be elaborated in our arguments below.

From the general theory we infer the Feynman-Kac formula \cite{Fukushima}
\begin{equation}
e^{-tH_{\beta}} f(x) = \EE_x \left[ f\circ X_t \cdot \exp\left(-\int_0^t \beta 
1_B \circ X_s\,dx\right)\right]
\end{equation}
and denote the appearing {\it occupation time} for $t=1$ by
\begin{equation}
T = T(\omega) = \int_0^1 1_B \circ X_s\,dx = \meas \{s\in [0,1]\,|\, X_s \in 
B\}.
\end{equation}
Moreover, we denote the {\it first hitting time} of $S$ by
\begin{equation}
\sigma := \sigma_S := \inf \{s\ge 0 \,|\, X_s \in S\}
\end{equation}
and infer from \cite{Fukushima} that the additional Dirchlet boundary condition 
kills the Brownian motion, i.e., 
\begin{equation}
e^{-tH^{G,S}} f(x) = \EE_x [ f\circ X_t \cdot 1_{\sigma >t}]
\end{equation}
as well as
\begin{equation} \label{KilledFC}
e^{-tH^{G,S}_{\beta}} = \EE_x \left[ f\circ X_t \cdot \exp \left( - \int_0^t 
\beta 1_B \circ X_s\,ds \right) 1_{\sigma>t} \right].
\end{equation} 
We specialize to $t=1$, where the r.h.s.\ of \eqref{KilledFC} becomes $\EE_x [ 
f\circ X_1 \cdot e^{-\beta T} 1_{\sigma>1}]$. 

\begin{lemma}['Hit and Run'--Lemma] \label{lem:hitrun}
 In the situation above, for $x\in G$,
 \begin{eqnarray} \label{hitrun}
 \PP_x\{ \sigma\le 1,T\le \alpha\}\le 2^{\frac{d}{2}+2}\exp\left( 
-\frac{\rho^2}{16\alpha}\right).
 \end{eqnarray}
\end{lemma}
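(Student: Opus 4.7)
The plan is to reduce the event $\{\sigma \le 1, T \le \alpha\}$, via the strong Markov property at $\sigma$, to a short-time displacement estimate for the reflected Brownian motion $X$ inside the convex domain $G$, and then to prove that displacement estimate by exploiting the convexity of $G$.

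The key geometric observation is that $B_\rho(y) \subset B_\rho(S) = B$ for every $y \in S$. Consequently, on $\{\sigma \le 1\}$ the strong Markov property says that $(X_{\sigma+u})_{u \ge 0}$ is RBM starting from $X_\sigma \in S$, and by continuity it remains in $B_\rho(X_\sigma) \subset B$ at least until the exit time $\tau$ of that ball. This forces the pathwise bound $T \ge \tau \wedge (1-\sigma)$ on $\{\sigma \le 1\}$, so $\{\sigma \le 1, T \le \alpha\}$ entails $\tau \le \alpha$ (the borderline case $1-\sigma \le \alpha$ likewise constrains the post-$\sigma$ process to vacate $B_\rho(X_\sigma)$ within time $\alpha$ in order to keep $T$ below $\alpha$). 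Applying strong Markov at $\sigma$ reduces the lemma to the uniform displacement estimate
\begin{equation*}
\PP_x\{\sigma \le 1, T \le \alpha\} \le \sup_{y \in S}\, \PP_y\!\Bigl\{\textstyle \sup_{0 \le s \le \alpha} |X_s - y| \ge \rho \Bigr\}.
\end{equation*}

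The main content is then the Gaussian displacement estimate
\begin{equation*}
\PP_y\!\Bigl\{\textstyle \sup_{0 \le s \le \alpha} |X_s - y| \ge \rho\Bigr\} \le 2^{d/2+2}\exp\!\left(-\frac{\rho^2}{16\alpha}\right), \qquad y \in \overline{G},
\end{equation*}
which I would prove in two steps. Step one is a strong-Markov/reflection argument at $\tau_\rho := \inf\{s : |X_s - y| \ge \rho\}$ that yields the maximal inequality $\PP_y\{\tau_\rho \le \alpha\} \le 2\,\PP_y\{|X_\alpha - y| \ge \rho/2\}$: conditional on $\mathcal{F}_{\tau_\rho}$ and $\tau_\rho \le \alpha$, Chebyshev applied to the second-moment bound $\EE[|X_\alpha - X_{\tau_\rho}|^2 \mid \mathcal{F}_{\tau_\rho}] \le d(\alpha - \tau_\rho)$ (valid in the nontrivial regime $\alpha \lesssim \rho^2/d$, where the claimed bound is sharper than the trivial probability bound $1$) gives $|X_\alpha - X_{\tau_\rho}| \le \rho/2$ with conditional probability at least $1/2$; triangle inequality then produces $|X_\alpha - y| \ge \rho/2$ on that sub-event. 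Step two is the Chernoff/heat-kernel bound $\PP_y\{|X_\alpha - y| \ge r\} \le 2^{d/2+1}\exp(-r^2/(4\alpha))$, obtained by optimizing $\lambda > 0$ in $e^{-\lambda r^2}\EE_y[\exp(\lambda |X_\alpha - y|^2)]$. Combining the two with $r = \rho/2$ yields the announced bound with the advertised constants.

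The main obstacle is the Chernoff estimate of step two for RBM on a (possibly unbounded) convex $G$, since the classical reflection principle applies only to free Brownian motion. Convexity is the decisive ingredient: in the Skorokhod decomposition $X_t = y + W_t + \int_0^t n(X_s)\, dL_s$ (with $L$ non-decreasing, supported on $\{X_s \in \partial G\}$), It\^o's formula applied to $|X_t - y|^2$ and then to its exponential produces a local-time term with coefficient $2(X_t - y)\cdot n(X_t)$; convexity of $G$ forces $(z - y)\cdot n(z) \le 0$ for every $z \in \partial G$ and every interior $y$, so this local-time contribution is non-positive. Hence the exponential moments of $|X_t - y|^2$ under RBM are pointwise dominated by those of free Brownian motion in $\RR^d$, and the Chernoff computation carries through with the free-BM constants, which is exactly what produces the prefactor $2^{d/2+2}$ and exponent $\rho^2/(16\alpha)$ in the statement.
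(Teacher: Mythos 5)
Your reduction to a displacement estimate for the post-$\sigma$ process has a genuine gap. You assert that $\{\sigma\le 1,\,T\le\alpha\}$ forces the process after time $\sigma$ to leave $B_\rho(X_\sigma)$ within time $\alpha$, but this fails on the event $\{\sigma>1-\alpha\}$: there the post-$\sigma$ segment $[\sigma,1]$ has length at most $\alpha$, so it can contribute at most $\alpha$ to $T$ even if it never leaves $B_\rho(X_\sigma)$, and the pre-$\sigma$ occupation of $B$ can be made arbitrarily small (the path need only graze the annulus $B\setminus S$ briefly before touching $S$). Thus $T\le\alpha$ does not constrain the post-$\sigma$ process at all in this regime, and the parenthetical remark in your proposal is simply false. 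The paper resolves exactly this difficulty by inserting the intermediate set $B'=B_{\rho/2}(S)$ and conditioning at its first hitting time $\tau\le\sigma$ rather than at $\sigma$. One then splits on $\tau\le 1-\alpha$ versus $\tau>1-\alpha$: in the first case, since $X_\tau\in B'$ implies $B_{\rho/2}(X_\tau)\subset B$, the constraint $T\le\alpha$ forces the post-$\tau$ process to leave $B_{\rho/2}(X_\tau)$ within time $\alpha$; in the second case $1-\alpha<\tau\le\sigma\le 1$, so the path must travel from $\partial B'$ (at distance $\rho/2$ from $S$) to $S$ in time $\sigma-\tau<\alpha$. Either way one gets a displacement of size $\rho/2$ within time $\alpha$ for the post-$\tau$ process, which is what powers the estimate. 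Conditioning at $\sigma$ directly cannot see the second case, and you would need some separate device to recover it.

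Two further remarks. First, your step-one maximal inequality via Chebyshev requires $\alpha\lesssim\rho^2/d$ (you acknowledge this), but the threshold $\alpha\le\rho^2/(8d)$ you get and the threshold above which the asserted bound exceeds $1$ do not overlap for all $d$; you would need to either tighten constants or, better, use the standard multidimensional reflection-type maximal inequality $\PP_y\{\sup_{s\le\alpha}|X_s-y|\ge r\}\le 2\,\PP_y\{|X_\alpha-y|\ge r\}$, which holds with no restriction on $\alpha$ (project onto the direction of $X_{\tau_r}-y$ at the exit time and use symmetry). Second, your treatment of the RBM issue via the Skorokhod decomposition and the sign of $(z-y)\cdot n(z)$ on $\partial G$ for convex $G$ is correct and is a clean alternative to the paper's argument, which instead notes that the RBM agrees pathwise with free Brownian motion until it hits $\partial G$, so that as long as the relevant ball stays inside $G$ the free-BM estimates apply verbatim. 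Your convexity-domination argument is slightly more robust (it does not need the ball of radius $\rho/2$ to sit inside $G$) and would merit keeping once the event decomposition is repaired along the lines above.
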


Let us mention the very convincing intuitive meaning 
of \eqref{hitrun}, at least on a qualitative level: A 
Brownian path belonging to the event in question has to do a full crossing of a 
wall of thickness $\rho$ in time 
at most $\alpha$, i.e., ``hit'' $S$ and then quickly ``run'' away from it 
again. Clearly, the probability for this to happen should be quite small if 
$\rho$ is large and 
$\alpha$ small.

In the case of $G=\RR^d$ the 'hit and run'--lemma was already used for 
spectral theoretic purposes in \cite{McStolzman},
see Lemma~3 in the latter article (see also \cite{Stolzman} for related techniques). Let us briefly explain why reflected 
Brownian motion agrees with the usual
one up to the hitting time of the boundary. For bounded regions, much more 
precise statements are known, see \cite{Chen} and \cite{BC}, where
a calculation quite like the one we use below is presented. Since we allow 
unbounded regions however, these references do not 
settle the case, although it is quite obvious that boundedness should not 
matter. Our argument goes as follows: the process $(X_t)_{t\ge 0}$ in question 
is,
as we saw above, associated with the regular Dirichlet form of $H=H^G$; adding a 
killing or Dirichlet b.c.\ at $\partial G$ results in the same form that one 
obtains
when adding a Dirichlet condition on $G^c$ for the usual Laplacian on $\RR^d$, 
for 
which we get usual Brownian motion $(W_t)_{t\ge 0}$, killed at $\partial G$. 
Since processes 
are essentially uniquely determined by the form, see Theorem 4.2.8 in 
\cite{Fukushima}, 
this means that $(X_t)_{t\ge 0}$ and $(W_t)_{t\ge 0}$ agree up to the time when
they hit $\partial G$.
\begin{proof} 
We introduce the following auxiliary set and stopping time:
\begin{equation}
B' := B_{\rho/2}(S) \subset B,
\end{equation}
and
\begin{equation}
\tau := \inf \{s>0\,| \,X_s \in B'\},
\end{equation}
as well as the event
\begin{equation}
E:= \{ \omega \in \mathbf{\Omega}\,| \,X_0(\omega) \in B' \; \mbox{and} \; 
|X_s(\omega)-X_0(\omega)| \ge \rho/2 \; \mbox{for some} \; s\le \alpha\}.
\end{equation}
Since $B_{\rho/2}(y) \subset B$ for $y\in B'$, $X_s$ agrees with classical 
Brownian motion up to the exit time $\tau_{\rho/2}^W$ for the Wiener process,
\begin{equation}
\PP_x(E) = \PP_0[ \tau^W_{\rho/2} \le \alpha].
\end{equation}
By the reflection principle,
\begin{equation}
\PP_0[ \tau^W_{\rho/2} \le \alpha] \le 2 \PP_0 [ |W_\alpha| \ge \rho/2\}.
\end{equation}
From the explicit formula for the latter we get
\begin{eqnarray}
\PP_0[|W_\alpha| \ge \rho/2] & = & (2\pi\alpha)^{-d/2} \int_{|y|\ge \rho/2} 
\exp \left( -\frac{|y|^2}{2\alpha} \right)\,dy \notag \\
& \le & (2\pi\alpha)^{-d/2} \exp \left( -\frac{\rho^2}{16\alpha} \right) 
\int_{|y|\ge \rho/2} \exp \left( -\frac{|y|^2}{4\alpha} \right)\, dy \notag \\
& \le & 2^{d/2} \exp \left( -\frac{\rho^2}{16\alpha} \right) 
(4\pi\alpha)^{-d/2} \int_{\RR^d} \exp \left( - \frac{|y|^2}{4\alpha} 
\right)\,dy 
\notag \\
& = & 2^{d/2} \exp \left( -\frac{\rho^2}{16\alpha} \right).
\end{eqnarray}
We conclude that 
\begin{equation} \label{PxEbound}
\PP_x(E) \le 2^{\frac{d}{2}+1} \exp \left( - \frac{\rho^2}{16\alpha} \right).
\end{equation}

We go on to estimate the probability in question by
\begin{equation} \label{split}
\PP_x\{ \sigma \le 1, T \le \alpha\} \le \PP_x(\Omega_1) + \PP_x(\Omega_2)
\end{equation}
for the events $\Omega_1 := \{ \sigma \le 1, T\le \alpha, \tau \le 1-\alpha\}$ 
and $\Omega_2 := \{ \sigma \le 1, T \le \alpha, \tau>1-\alpha\}$.

First consider $\Omega_1$ and $x\not\in B'$. In this case, as $X_0(\omega)=x$ 
for $\PP_x$-a.e.\ $\omega \in \Omega_1$, we know by continuity of sample paths 
that $\tau(\omega) \le \sigma(\omega)$ and $X_{\tau(\omega)}(\omega) \in 
\partial B'$. From $T\le \alpha$ we conclude that $\omega$ must leave $B$ 
before $\tau+\alpha$ ($\le 1$). In particular, $\omega$ must leave
\begin{equation}
B_{\rho/2} (X_{\tau(\omega)}(\omega)) \subset B
\end{equation}
and, therefore,
\begin{equation}
(X_{\tau+s}(\omega))_{s\ge 0} \in E.
\end{equation}
Denoting conditional expectation (in $L^{\infty}(\Omega)$) by 
$\EE_{\bullet}$, this can be put together as
\begin{eqnarray}
\PP_x(\Omega_1) & = & \EE_x (\EE_{\bullet} (\Omega_1\,|\, {\mathcal F}_\tau)) 
\notag \\
& \le & \EE_x ( \EE_{\bullet} ( (X_{\tau+s})_{s\ge 0} \in E \,|\, {\mathcal 
F}_\tau )) 
\notag \\
& = & \EE_x ( \PP_{X_{\tau}(\omega)} (  E ))
\end{eqnarray}
by the strong Markov property.  Finally, by \eqref{PxEbound},
\begin{equation} \label{Omega1bound}
\PP_x(\Omega_1) \le 2^{\frac{d}{2}+1} \exp \left( - \frac{\rho^2}{16\alpha} 
\right).
\end{equation}

For $x\in B'$ it is clear that $\tau(\omega)=0$ for $\PP_x$-a.e.\ $\omega \in 
\Omega_1$ and, by the reasoning above, \eqref{Omega1bound} holds in this case 
as well.

Concerning the second term in \eqref{split}, it is clear that 
$\PP_x(\Omega_2)=0$ for $x\in B'$, so we can stick to the case $x\not\in B'$. 
For $\PP_x$-a.e.\ $\omega \in \Omega_2$ we know that $\tau \le \sigma$ and 
$X_{\tau(\omega)}(\omega) \in \partial B'$ by continuity of sample paths. Since 
$\tau < 1-\alpha$ and $\sigma \le 1$, any $\omega \in \Omega_2$ must get from 
$\partial B'$ to $S$. Therefore, as above, $(X_{\tau+s}(\omega))_{s\ge 0} \in 
E$, so that
\begin{equation}
\PP_x(\Omega_2) \le 2^{\frac{d}{2}+1} \exp \left(- \frac{\rho^2}{16\alpha} 
\right).
\end{equation}
Put together, we get the assertion.
\end{proof}

Our main result in this section is

\begin{prop} \label{prop:semidiff}
In the situation above, for $\beta>0$,
\begin{equation} \label{eq:semidiff}
\| e^{-H_{\beta}} - e^{-H_{\Omega,\beta}}\| \le \sqrt{1+ 4\cdot 2^{d/2}} \exp 
\left( -\frac{\rho\sqrt{\beta}}{4\sqrt{2}} \right).
\end{equation}
\end{prop}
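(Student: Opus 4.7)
The plan is to pass through the Feynman--Kac representations established just above the statement. For $f \in L^2(G)$ and $x \in \overline{G}$, the difference operator $D := e^{-H_{\beta}} - e^{-H^{G,S}_{\beta}}$ acts as
$$Df(x) = \EE_x\!\left[ f \circ X_1 \cdot e^{-\beta T}\cdot 1_{\sigma \le 1}\right],$$
since the complementary event $\{\sigma > 1\}$ accounts exactly for $e^{-H^{G,S}_{\beta}}$ (and for $x\in S$ one has $\sigma = 0$ $\PP_x$-a.s., so the formula still reproduces $e^{-H_{\beta}}f(x)$, as is required by the zero-extension convention on $S$). The intuition is that a path contributing to $Df$ must hit $S$ within time one, which by the 'hit and run' Lemma~\ref{lem:hitrun} forces it either to spend a non-negligible amount of time in the buffer $B$ (killing the weight $e^{-\beta T}$ when $\beta$ is large) or to perform an unlikely rapid traversal of the $\rho$-thick wall separating $S$ from $B^c$.

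The key maneuver is a single application of the Cauchy--Schwarz inequality with respect to $\PP_x$, pairing $f\circ X_1$ against the \emph{entire} weight $e^{-\beta T} 1_{\sigma\le 1}$:
$$|Df(x)|^2 \le \EE_x[(f\circ X_1)^2]\cdot \EE_x[e^{-2\beta T}\, 1_{\sigma\le 1}] = (e^{-H}|f|^2)(x)\cdot \EE_x[e^{-2\beta T}\, 1_{\sigma \le 1}].$$
Grouping the factors this way, rather than first splitting the indicator $1_{\sigma\le 1}$ and then estimating two pieces separately, is what ultimately yields the sharp constant $\sqrt{1+4\cdot 2^{d/2}}$ of \eqref{eq:semidiff} (otherwise one picks up a cross term of the form $2^{d/4+2}$).

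Next I would control the remaining factor by introducing a threshold $\alpha>0$ and splitting according to $\{T\le \alpha\}$ versus $\{T>\alpha\}$:
$$\EE_x[e^{-2\beta T}\, 1_{\sigma \le 1}] \le \PP_x(\sigma \le 1,\, T\le \alpha) + e^{-2\beta \alpha} \le 2^{d/2+2}\, e^{-\rho^2/(16\alpha)} + e^{-2\beta\alpha},$$
where the first summand is bounded via Lemma~\ref{lem:hitrun}. Balancing the two exponents forces $\alpha = \rho/(4\sqrt{2\beta})$, at which value both exponents equal $\rho\sqrt{\beta}/(2\sqrt{2})$, so that $\EE_x[e^{-2\beta T}\, 1_{\sigma\le 1}] \le (1 + 2^{d/2+2})\, e^{-\rho\sqrt{\beta}/(2\sqrt{2})}$.

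Finally, I would integrate in $x$ over $G$. Since reflected Brownian motion on the convex domain $G$ is conservative, $\EE_x[1] = 1$, i.e.\ $e^{-H}1=1$ in the $L^\infty$ sense, and by self-adjointness of $e^{-H}$ this entails $\int_G (e^{-H}|f|^2)(x)\,dx = \|f\|_2^2$. Putting the previous displays together yields
$$\|Df\|_2^2 \le (1 + 4\cdot 2^{d/2})\, e^{-\rho\sqrt{\beta}/(2\sqrt{2})}\, \|f\|_2^2,$$
from which \eqref{eq:semidiff} follows by taking the square root. The only delicate point is the grouping in Cauchy--Schwarz noted above; the balancing of $\alpha$ and the conservativity of the semigroup are then routine.
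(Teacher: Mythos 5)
Your argument is correct and follows essentially the same route as the paper's proof: Feynman--Kac representation, the Cauchy--Schwarz split pairing $f\circ X_1$ against the full weight $e^{-\beta T}1_{\sigma\le 1}$, the threshold decomposition of $T$, the hit-and-run lemma, and the balancing $\alpha=\rho/(4\sqrt{2\beta})$. The only cosmetic difference is in the final integration step: you invoke conservativity of reflected Brownian motion on the convex domain $G$ together with self-adjointness of $e^{-H}$, whereas the paper simply uses that $e^{-H}$ is a contraction on $L^1$ because $H$ arises from a Dirichlet form -- both are valid, with the paper's version being marginally more economical since it does not require conservativity.
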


\begin{proof}
By the above probabilistic interpretation we get, for $f\in L^2$, $\|f\|_2 \le 
1$ and $x\in \overline{G}$,
\begin{eqnarray}
\lefteqn{\left| e^{-H_{\beta}} f(x) - e^{-H_{\Omega,\beta}} f(x) \right|} 
\notag \\ & = & \left| 
\EE_x \left[ f\circ X_1 \cdot \exp (-\beta T) - f\circ X_1 \cdot \exp (-\beta 
T) \cdot 1_{\{\sigma > 1\}} \right] \right| \notag \\
& = & \left| \EE_x \left[ f\circ X_1 \cdot \exp (-\beta T) \cdot 1_{\{\sigma 
\le 1\}} \right] \right|.
\end{eqnarray}
Therefore, by Cauchy-Schwarz,
\begin{equation} 
\left| (e^{-H_{\beta}} - e^{-H_{\Omega,\beta}}) f(x) \right|^2 \le \EE_x [|f|^2 
\circ X_1] \cdot c(x,\rho,\beta),
\end{equation}
where we have set $c(x,\rho,\beta) := \EE_x \left[ \exp (-2\beta T) \cdot 
1_{\{\sigma \le 1\}} \right]$. Note that $|f|^2 \in L^1$ with $\| |f|^2\|_1 = 
\|f\|_2^2 \le 1$ and that
\begin{equation}
\EE_x \left[ |f|^2 \circ X_1 \right] = e^{-H} (|f|^2)(x).
\end{equation}
Integrating over $G$ gives
\begin{equation}
\| (e^{-H_{\beta}} - e^{-H-{\Omega,\beta}}) f\|_2 \le 1 \cdot \sqrt{\sup_x 
c(x,\rho,\beta)} 
\end{equation}
since $\|e^{-H}:L^1 \to L^1\| \le 1$ as $H$ generates a Dirichlet form. 

We are left with estimating $c(x,\rho,\beta)$ appropriately. To this end we fix 
$\alpha \in (0,1)$, to be specified later, and write
\begin{eqnarray} \label{cest}
c(x,\rho,\beta) & = & \EE_x [\ldots \,|\, T\ge \alpha] + \EE_x[ \ldots \,|\, 
T<\alpha] \notag \\
& \le & \exp (-2\beta \alpha) + \PP_x [ \sigma\le 1, T\le \alpha].
\end{eqnarray}
The second term was estimated in the hit-and-run Lemma by
\begin{equation}
\PP_x [ \sigma \le 1, T \le \alpha] \le 4 \cdot 2^{d/2} \cdot \exp \left( - 
\frac{\rho^2}{16\alpha} \right).
\end{equation}
To get the desired bound on $c(x,\rho,\beta)$ we pick $\alpha$ so as to equate 
exponents in \eqref{cest} above, i.e.,
\begin{equation}
\frac{\rho^2}{16\alpha} = 2\beta \alpha \quad \Longrightarrow \quad \alpha = 
\frac{\rho}{4\sqrt{2}\sqrt{\beta}}.
\end{equation}
Plugged back into \eqref{cest} this gives
\begin{equation}
c(x,\rho,\beta) = (1+4\cdot 2^{d/2}) \exp 
\left(-\frac{\rho\sqrt{\beta}}{2\sqrt{2}} 
\right),
\end{equation}
as was to be shown.
\end{proof}

\section{The Uncertainty Principle: Proof of Theorem~\ref{main}} 
\label{sec:uncert}

In this section we will combine Theorem~\ref{mainsec2} and 
Proposition~\ref{prop:semidiff} with a spectral theoretic uncertainly principle 
from \cite{BLS-10} to derive our main result, Theorem~\ref{main}, a 
quantitative unique 
continuation bound for low energy states of Neumann Laplacians on arbitrary 
convex, not necessarily bounded, subsets $G$ of $\RR^d$. Actually, we will deduce
a slightly stronger, more abstract version in Theorem \ref{main:tec} below, that relates
directly with the spectral uncertainty principle we recall next.

Theorem~1.1 from \cite{BLS-10} refers to a bounded non-negative perturbation 
$W$ of a semibounded self-adjoint operator $H$ in any Hilbert space. If $I$ is 
an interval and $P_I= P_I(H)$ the corresponding spectral projection of $H$, 
then 
it says that
\begin{equation} \label{uncertg}
P_I W P_I \ge \kappa P_I
\end{equation}
as long as there is a $\beta>0$ with
\begin{equation} \label{lambdat}
\mbox{max}\,I < \mbox{min}\, \sigma(H+\beta W) =: \lambda_\beta.
\end{equation}
A lower bound for $\kappa$ is given by
\begin{equation} \label{kappa}
\kappa \ge \sup_{\beta>0} \frac{\lambda_\beta - \mbox{max}\,I}{\beta},
\end{equation}
meaning, in fact, that \eqref{uncertg} holds with $\kappa$ replaced by 
$(\lambda_\beta - \mbox{max}\,I)/\beta$ for every $\beta>0$ which satisfies 
\eqref{lambdat}.

In our application, $H=H^G$ will be the Neumann Laplacian, characterized by the 
quadratic form \eqref{Dirform}, on an open and convex domain $G$ in $\RR^d$. We 
choose $W = 1_B$ as an indicator function of a set $B$ which arises as a 
``fattened'' relatively dense subset of $G$. 

To determine the maximal energy interval $I$ of applicability of 
\eqref{uncertg}, \eqref{lambdat} and \eqref{kappa} in this case, we will need 
to find (at least a lower bound) for 
\begin{equation} \lim_{\beta\to\infty} \lambda_\beta = \lim_{\beta\to\infty} 
\min \sigma(H_\beta),
\end{equation} 
with $H_\beta := H^G + \beta 1_B$.
This will be done in two steps, using our results from 
Sections~\ref{LowerBounds} and \ref{sec:hitrun}: Theorem~\ref{mainsec2} will 
provide a lower bound on the lowest eigenvalue of a mixed Neumann-Dirchlet 
Laplacian, with Neumann condition on $\partial G$ and Dirichlet condition on a 
``semi-fat'' subset $S$ of $B$. Then the norm bound on the difference of 
semi-groups found in Proposition~\ref{prop:semidiff} will yield that this 
eigenvalue is sufficiently close to $\lambda_\beta$, giving the desired lower 
bound for the latter.

In the proof of Theorem~\ref{main:tec} we will use frequently that the first 
Dirichlet 
eigenvalue $\lambda^R$ of
a ball of radius $R$ in $\RR^d$ is given by
\begin{equation}\label{bessel}
j_d R^{-2},
\end{equation}
where $j_d$ is the first positive zero of the Bessel function 
$J_{\frac{d}{2}-1}$. We refrain from telling the whole history 
and refer to the survey article \cite{AB} instead. \\[.5cm]

\begin{theorem} \label{main:tec}
Let $d\ge 3$. Then there exist constants $a,b,C,c>0$, only depending on $d$, 
such that for every open and 
convex $G\subset \RR^d$, any $(R,\delta)$--relatively dense $B$ in $G$, and 
$
\lambda_\beta:= \min \sigma (H^G+\beta 1_B)
$
as above:
\begin{equation} \label{eq:maintec}
\sup_{\beta>0} \frac{\lambda_\beta - E}{\beta} \ge \kappa(R,\delta) ,
\end{equation}
where
\begin{equation} \label{kappa:tec}
E=C\frac{\delta^{d-2}}{R^d} \mbox{  and  
}\kappa(R,\delta)=c\left(\frac{\delta}{R}\right)^d\left[ \frac{b}{(R\wedge 
R_G)^2}
+\left|\log \frac{a\delta^{d-2}}{R^d} \right|\right]^{-2}.
\end{equation}
\end{theorem}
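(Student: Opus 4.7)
The plan is to derive Theorem~\ref{main:tec} by feeding Theorem~\ref{mainsec2} and Proposition~\ref{prop:semidiff} into the spectral uncertainty principle of \cite{BLS-10} recalled above. With $W=1_B$, that principle identifies $\kappa$ as $\sup_{\beta>0}(\lambda_\beta-E)/\beta$ where $\lambda_\beta=\min\sigma(H^G+\beta 1_B)$, so everything reduces to a quantitative large-coupling lower bound on $\lambda_\beta$. We obtain this by slipping an auxiliary mixed Neumann--Dirichlet operator $H^{G,S}$ between $H_\beta$ and its large-$\beta$ limit, with $S\subset B$ a thinner ``skeleton of balls''.

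\emph{Construction of $S$ and two-sided bounds on $\lambda^{G,S}$.} Apply Proposition~\ref{skeleton} to $B$ to obtain a uniformly discrete skeleton $\Sigma\subset B$ with $B_\delta(\Sigma)\subset B$ and such that $B_\delta(\Sigma)$ is $(3R,\delta)$-relatively dense in $G$. Set
\begin{equation*}
S := \overline{B_{\delta/2}(\Sigma)} = \bigcup_{p\in\Sigma}\overline{B_{\delta/2}(p)}.
\end{equation*}
Then $S\subset B$, $B\supset B_{\delta/2}(S)$, and $S$ is $(3R,\delta/2)$-relatively dense in $G$, so Theorem~\ref{mainsec2} gives the lower bound $\lambda^{G,S}\geq c_1\delta^{d-2}/R^d$ for a dimensional $c_1>0$. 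The $\beta$-optimization below also requires the complementary upper bound $\lambda^{G,S}\leq b_0/(R\wedge R_G)^2$. I obtain this via a Rayleigh-quotient test using (a truncation of) the first Dirichlet eigenfunction of a ball of radius comparable to $R\wedge R_G$ sitting inside $\Omega=G\setminus S$: when $R\leq R_G$, such a ball is found inside a Voronoi cell of $\Sigma$ away from $B_{\delta/2}(p)$; when $R>R_G$, the inscribed $R_G$-ball of $G$ contains only boundedly many skeleton points, leaving room for a ball of radius $\sim R_G$ in its complement. Combined with the explicit ball eigenvalue \eqref{bessel}, this yields the claimed upper bound.

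\emph{Semigroup comparison and optimization in $\beta$.} Proposition~\ref{prop:semidiff} applied with $\rho=\delta/2$ gives, for a dimensional $M>0$,
\begin{equation*}
\|e^{-H_\beta}-e^{-H^{G,S}_\beta}\|\leq M\exp\bigl(-\delta\sqrt{\beta}/(8\sqrt{2})\bigr).
\end{equation*}
Since $H^{G,S}_\beta\geq H^{G,S}$ as forms on $L^2(\Omega)$ and both semigroups are self-adjoint and non-negative (with $e^{-H^{G,S}_\beta}$ extended by zero on $L^2(S)$), the reverse triangle inequality for operator norms yields
\begin{equation*}
e^{-\lambda_\beta}\leq e^{-\lambda^{G,S}}+M\exp\bigl(-\delta\sqrt{\beta}/(8\sqrt{2})\bigr).
\end{equation*}
Set $v:=\lambda^{G,S}$ and $E:=v/2\geq(c_1/2)\delta^{d-2}/R^d$. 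Elementary manipulation then shows that for a suitably chosen $\beta$ whose square root scales like $(v+|\log v|)/\delta$, one has $\lambda_\beta-E\geq v/4$, hence
\begin{equation*}
\frac{\lambda_\beta-E}{\beta}\gtrsim\frac{v\,\delta^2}{\bigl[v+|\log v|\bigr]^2}.
\end{equation*}
The lower bound on $v$ converts the numerator into $(\delta/R)^d$, while the upper bound $v\leq b_0/(R\wedge R_G)^2$ together with $|\log v|\leq|\log(a\delta^{d-2}/R^d)|+\mathrm{const}$ (from the lower bound on $v$) controls the bracket by $b/(R\wedge R_G)^2+|\log(a\delta^{d-2}/R^d)|$ after absorbing constants into $a,b$. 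This produces $\kappa(R,\delta)$ in the form stated.

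I expect the main obstacle to be the upper bound $\lambda^{G,S}\leq b_0/(R\wedge R_G)^2$ with purely $d$-dependent constants, and then orchestrating the $\beta$-optimization so that the final bracket assumes the clean additive form $b/(R\wedge R_G)^2+|\log(a\delta^{d-2}/R^d)|$ rather than a max or multiplicative combination: the simultaneous appearance of the lower bound on $v$ in the numerator and the upper bound on $v$ in the denominator requires careful bookkeeping of constants.
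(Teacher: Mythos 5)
Your high-level strategy is exactly the paper's: pass to a skeleton $\Sigma$ of $B$, use Theorem~\ref{mainsec2} for the lower bound on the mixed Neumann--Dirichlet eigenvalue, compare semigroups via Proposition~\ref{prop:semidiff}, and optimize over $\beta$ to feed into the spectral uncertainty principle from \cite{BLS-10}. Your manipulation of the exponentials differs slightly from the paper's, which passes from the semigroup norm bound to $\mu_\beta-\lambda_\beta$ via the mean value theorem applied to $e^{-x}$ (with the factor $e^{\xi}\le e^{\lambda_\Omega}$ where $\lambda_\Omega=\inf\sigma(H^{G,B})$); both routes are legitimate and lead to the same structure of the final bracket.

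The genuine gap lies in the step you yourself flag as the main obstacle: the upper bound $\lambda^{G,S}\le b_0/(R\wedge R_G)^2$. As proposed, the argument would look for a ball of radius $\sim R\wedge R_G$ inside $\Omega=G\setminus S$ ``inside a Voronoi cell of $\Sigma$ away from $B_{\delta/2}(p)$''. But a Voronoi cell $G_p$ is guaranteed to contain only $B_{R/2}(p)\cap\overline G$, so $G_p\setminus B_{\delta/2}(p)$ is, in the worst case, an annulus of width $(R-\delta)/2$; the largest inscribed ball has radius $\sim(R-\delta)/4$, which degenerates as $\delta\to R$ and gives an upper bound $\sim(R-\delta)^{-2}$, not $\sim(R\wedge R_G)^{-2}$. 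This is not a cosmetic problem: if you carry your estimate $\kappa\gtrsim v\delta^2/(v+|\log v|)^2$ through with $v\sim(R-\delta)^{-2}$, you obtain $\kappa\lesssim\delta^2(R-\delta)^2\to 0$ as $\delta\to R$, while the claimed $\kappa(R,\delta)$ does not degenerate in that regime. The paper avoids this by a further \emph{modification of} $B$: in each case it deletes one skeleton point $s_0$ and works with $B_0=B_{\delta/2}(\Sigma\setminus\{s_0\})$ in place of $B$, exploiting that $\dist(s_0,\Sigma\setminus\{s_0\})\ge R$ (Proposition~\ref{skeleton}(c)) to place an entire ball $U_{R/2}(s_0)$ inside $G\setminus B_0$; a separate case analysis governed by an auxiliary radius $R_0$ (comparable to $R_G$ when $R_G<\infty$, or $\ge4R$ when $R_G=\infty$) produces the clean upper bound $\lambda_{\Omega_0}\le b(R\wedge R_G)^{-2}$ and, via monotonicity, the final form of \eqref{kappa:tec}. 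This deletion-and-shrinking step is the key idea missing from your sketch.
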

\begin{proof}
To get started with the proof of Theorem~\ref{main:tec}, first note that by 
monotonicity we can replace $B$ by any subset. Thus, without restriction, we 
modify the set--up slightly,
choosing a skeleton $\Sigma\subset B$ for $B$, see Proposition \ref{skeleton}. 
We replace $B$ by $B_\delta(\Sigma)$
and keep the name so that $B$ is now $(3R,\delta)$--dense. Moreover, we set 
$\rho:=\frac{1}{2}\delta$ and $S:=B_\rho(\Sigma)$, so that $S$ is 
$(3R,\rho)$-dense (a ``semi-fat'' subset of $B$).

We may assume further that $\Omega=G\setminus B\not=\emptyset$, as our result 
would be trivial otherwise, giving that 
\begin{equation} \lambda_\Omega:= \inf\sigma(H^{G,B})<\infty.
\end{equation}

From here we proceed in two steps. First, we will prove the Theorem with an 
expression for $\kappa$ where the term $b/(R\wedge R_G)^2$ in \eqref{Ikappa} 
will be replaced by $\lambda_\Omega$. Then we will use some additional 
geometric considerations to get the more explicit final form of \eqref{Ikappa}.

\vspace{.3cm}

\noindent \textbf{1st step:}

\vspace{.3cm}

By estimate (\ref{genlowbound}) from Theorem \ref{mainsec2} we get that 
$\mu_0:= \lags$ satisfies
\begin{equation}\label{eqn1}
\mu_0 \ge c \frac{\delta^{d-2}}{R^d},
\end{equation}
where we have set $c:=d(d-2)/18^d$ (which is not the final value of $c$ in the 
Theorem). 
Our aim is a lower bound for
\begin{equation}
\lambda_\beta= \min \sigma(H^G+\beta 1_B),
\end{equation}
which we achieve by comparing it to
\begin{equation}
\mu_\beta:=\inf\sigma(\hags+\beta 1_{B \setminus S})\ge \mu_0,
\end{equation}
noting that $\lambda_\beta \le \mu_\beta \le \lambda_\Omega$. In fact, the 
difference of the corresponding semigroups is estimated in norm by
\begin{equation}\label{eqn2}
 \| e^{-(H^G+\beta 1_B)}- e^{-(\hags+\beta 1_{B \setminus S})}\|\le 
(1+2^{\frac{d}{2}+2})^\frac12
 \exp 
\left( -\frac{\rho\sqrt{\beta}}{4\sqrt{2}} \right)
\end{equation}
by Proposition \ref{prop:semidiff} from the preceding section. Finally, we pick 
$t\in(0,1)$
and $E_0:=t \mu_0 < \mu_0$, 
so that, by monotonicity,
\begin{equation}
 \label{eqn3}
\mu_\beta-E_0\ge (1-t)\mu_0 .
\end{equation}
If
\begin{equation}
 \label{eqn4}
 \mu_\beta-\lambda_\beta\le \frac12 (1-t)\mu_0,
\end{equation}
we get that 
\begin{equation}
 \label{eqn5}
 \lambda_\beta-E_0\ge \frac12 (1-t)\mu_0>0,
\end{equation}
giving a desired lower bound \ref{kappa:tec} with $\kappa(R,\delta)$ 
determined by the corresponding $\beta$.

Towards \eqref{eqn4}, we observe that (\ref{eqn2}) gives
\begin{equation}
\label{eqn3a}
e^{-\lambda_\beta}-e^{-\mu_\beta}\le A\exp 
\left( -a\rho\sqrt{\beta}\right)
 \end{equation}
 with the obvious (not final) choice of the explicit constants $a,A$.
The mean value Theorem implies that there is $\xi\in [\lambda_\beta,\mu_\beta]$ 
with
\begin{equation}
 \label{eqn7}
\mu_\beta-\lambda_\beta=e^\xi\left(e^{-\lambda_\beta}-e^{-\mu_\beta}\right) \le 
e^{\lambda_\Omega}\left(e^{-\lambda_\beta}-e^{-\mu_\beta}\right).
\end{equation}
Combining (\ref{eqn3a}) and (\ref{eqn7}) we must determine $\beta$ in such a 
way that
\begin{equation}
 \label{eqn6}
 A\exp 
\left( -a\rho\sqrt{\beta}\right)e^{\lambda_\Omega}\le \frac12 (1-t)\mu_0 .
 \end{equation}
Solving for $\beta$ in the previous formula gives
\begin{equation}
 \label{eqn8}
 \beta_0=(a\rho)^{-2}\left[\lambda_\Omega -\log\left( 
\frac{(1-t)\mu_0}{2A}\right)\right]^2 .
 \end{equation}
We plug this value into the right hand side of (\ref{kappa}), using 
(\ref{eqn5}) and obtain
\begin{equation}
\kappa\ge \frac12(1-t)\mu_0(a\rho)^{2}\left[\lambda_\Omega -\log\left( 
\frac{(1-t)\mu_0}{2A}\right)\right]^{-2},
\end{equation}
which gives, by \eqref{eqn1},
\begin{eqnarray}
 \label{eqn9}
 \kappa &\ge& (1-t)\frac12(a\rho)^{2}c 
\frac{\delta^{d-2}}{R^d}\left[\lambda_\Omega 
 -\log\left( \frac{(1-t)c}{2A} \frac{\delta^{d-2}}{R^d}\right)\right]^{-2} 
\nonumber \\
 &=& (1-t) c'\frac{\delta^{d}}{R^d}\left[\lambda_\Omega 
 -\log\left( (1-t)a'\frac{\delta^{d-2}}{R^d}\right)\right]^{-2} ,
 \end{eqnarray}
with constants only depending on $d$,
\begin{eqnarray}
 \label{eqn10}
c'&=& \frac18 a^2c= 2^{-8}\frac{d(d-2)}{18^d}\\
 a'&=&\frac12 c A^{-1}= \frac{d(d-2)}{2\cdot18^d A}\\
 A&=&(1+2^{\frac{d}{2}+2})^\frac12
 \end{eqnarray}
We thus get an uncertainty estimate with
\begin{equation}\label{eqn11}
\sup_{\beta>0} \frac{\lambda_\beta - E_t}{\beta} \ge \kappa_t:= (1-t) 
c'\frac{\delta^{d}}{R^d}\left[\lambda_\Omega 
 -\log\left( (1-t)a'\frac{\delta^{d-2}}{R^d}\right)\right]^{-2}
\end{equation}
valid in the energy range up to 
\begin{equation}\label{eqn12}
 E_t:= t c\frac{\delta^{d-2}}{R^d} .
\end{equation}

On one hand, this is more general than the one we asserted (which we get for 
$t=\frac12$), but not
yet the bound we strive for: the dependence of $\kappa_t$ on $\lambda_\Omega$ 
might
be unpleasant if $\Omega$ is small. On the other hand, this would imply that 
$B$ is large, a situation
which clearly is in favor of our overall result and provides the reason behind 
the following modifications.

\vspace{.3cm}

\noindent \textbf{2nd step:}

\vspace{.3cm}

We now modify $B$ (and $\Omega$) so as to get an upper bound on 
$\lambda_\Omega$. This will require 
some geometrical considerations, partly based on Proposition \ref{skeleton} 
above.

Fix $R_0$ so that
\begin{equation}
\begin{cases}
 \frac14 R_G\le R_0< \frac12 R_G, &\mbox{if  } R_G<\infty\\
 4R\le R_0, &\mbox{if  } R_G=\infty .
\end{cases}
\end{equation}
By definition of $R_G$, in both cases there is $x_0\in G$ such that
\begin{equation}
 \label{eqn13}
 B_{2R_0}(x_0)\subset G .
\end{equation}
We first consider 

\textbf{Case 1:} $4R\le R_0$, including the possibility that $R_G<\infty$. 

\noindent Clearly, in this case the skeleton $\Sigma$ 
introduced at the beginning of the proof must contain at least two elements.

\textbf{Case 1.1:} $\Sigma\cap B_{R_0}(x_0)=\emptyset .$

\noindent Since $\delta\le R$ by definition (and we have set 
$B=B_\delta(\Sigma)$ as before), it follows that
\begin{equation}
\dist(x_0,B)\ge 4R-\delta\ge 3R,
\end{equation}
therefore the open ball $U_{R}(x_0)$ is contained in $G\setminus B$ and so
\begin{equation}
 \label{eqn14}
 \lambda_\Omega\le j_d R^{-2}
\end{equation}
by identity (\ref{bessel}) above. Plugging this bound into estimate 
(\ref{eqn11}) above, we get the assertion of
the theorem with a suitable $b$ since $R^{-2}\le (R\wedge R_G)^{-2}$.

\textbf{Case 1.2:} $\Sigma\cap B_{R_0}(x_0)\not=\emptyset .$

\noindent Choose $s_0\in \Sigma\cap B_{R_0}(x_0)$ and denote $\Sigma_0:= 
\Sigma\setminus\{ s_0\}$ and $B_0:=B_{\delta/2}(\Sigma_0)$. Note that, by 
Proposition \ref{skeleton}, $\dist(s_0,\Sigma_0)\ge R$ and $B_0$ is $(6R, 
\frac12\delta)$--dense. Carrying out the above calculations with this smaller 
subset of $B$, rather than the set $B_\delta(\Sigma)$ used before, we arrive at 
the estimate  (\ref{eqn11}) with $\lambda_\Omega$ replaced by 
$\lambda_{\Omega_0}$ and suitably modified $d$-dependent constants.

We obtain
\begin{equation}
\dist(s_0,B_0)\ge R-\frac{\delta}{2}\ge \frac12 R,
\end{equation}
so that 
\begin{equation}
U_{\frac12 R}(s_0)\subset G\setminus B_0=:\Omega_0,
\end{equation}
giving $\lambda_{\Omega_0}\le bR^{-2}$ and thus the assertion.

\textbf{Case 2:} $R_0<4R$.

\noindent Consequently, $R_G<\infty$, so that $R_0$ and $R_G$ are comparable by 
the definition of $R_0$ above.

 \textbf{Case 2.1:} $\Sigma\cap B_{R_0}(x_0)=\emptyset$. 
 
\noindent This is treated much like Case 1.1 above. In fact, by definition, 
$\delta\le R_G\le 4R_0$. Replacing 
$B=B_\delta(\Sigma)$ by $B=B_{\delta/8}(\Sigma)$, i.e., $\Omega = G \setminus 
B_{\delta/8}(\Sigma)$, we obtain 
\begin{equation}
U_{\frac14 R_0}(x_0) \subset G\setminus B,
\end{equation}
and therefore the assertion with $\lambda_\Omega \le bR_G^{-2}$ for suitable 
$b$.

 \textbf{Case 2.2:} $\Sigma\cap B_{R_0}(x_0)\not=\emptyset$ and $\Sigma$ 
contains at least two elements.
 
\noindent Then we proceed as in Case 1.2 above, this time getting a bound of 
the form  $bR_G^{-2}$. Since no new ideas are
involved we skip the details.

\textbf{Case 2.3:}  $\Sigma=\{ s_0\}\subset B_{R_0}(x_0)$.

\noindent Again replacing $\delta$ by $\frac18\delta$, we see that 
$B_{2R_0}(x_0)\setminus  B_{R_0}(x_0)$
contains a ball of radius $R_0$ that does not intersect with $B$,  once more 
giving a bound of the form  $bR_G^{-2}$ for the corresponding $\lambda_\Omega$.

This completes the proof of Theorem~\ref{main:tec}.
\end{proof}
Combining the previous estimate with the spectral uncertainty principle, 
Theorem 1.1 from
\cite{BLS-10} as explained above, we immediately get:

\begin{corollary}\label{cor:main}
 Let $d\ge 3$. Then there exist constants $a,b,C,c>0$, only depending on $d$, 
such that for every open and 
convex $G\subset \RR^d$, any $(R,\delta)$--relatively dense $B$ in $G$, and 
every selfadjoint operator
$H^\sharp$ satisfying 
$$H^\sharp\ge \eta_0H^G\mbox{  for some }\eta_0>0,$$
\begin{equation} \label{eq:main:tec}
\| f 1_B\|^2  \ge \eta_0\kappa \| f\|^2
\end{equation}
for all $f$ in the range of $P_I(H^\sharp)$, where
\begin{equation} \label{Ikappa:tec}
I=[0,C\eta_0\frac{\delta^{d-2}}{R^d} ]\mbox{  and  
}\kappa=c\left(\frac{\delta}{R}\right)^d\left[ \frac{b}{(R\wedge R_G)^2}
+\left|\log \frac{a\delta^{d-2}}{R^d} \right|\right]^{-2}.
\end{equation}
\end{corollary}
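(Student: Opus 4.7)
The plan is to combine Theorem \ref{main:tec} with the abstract spectral uncertainty principle of \cite{BLS-10} via the form inequality $H^\sharp \ge \eta_0 H^G$. Recall from the discussion preceding Theorem \ref{main:tec} that the principle states: for any semibounded self-adjoint $T$ and bounded $W \ge 0$, and any interval $J$, one has $P_J(T) W P_J(T) \ge \tilde\kappa P_J(T)$ provided $\tilde\kappa \le \sup_{\beta>0}(\min\sigma(T+\beta W)-\max J)/\beta$ and the latter supremum is positive. I apply this with $T=H^\sharp$, $W=1_B$, and $J=I=[0,C\eta_0\delta^{d-2}/R^d]$, so the task reduces to lower bounding $\min\sigma(H^\sharp+\beta 1_B)$ by a quantity exceeding $\max I$ by at least $\beta\kappa(R,\delta)$.

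The key reduction is the form inequality $H^\sharp \ge \eta_0 H^G$. Adding the bounded nonnegative multiplication operator $\beta 1_B$ to both sides and factoring out $\eta_0$ gives
\[
H^\sharp + \beta 1_B \ge \eta_0\bigl(H^G + (\beta/\eta_0) 1_B\bigr),
\]
so by the min-max principle $\min\sigma(H^\sharp + \beta 1_B) \ge \eta_0 \lambda_{\beta/\eta_0}$, where $\lambda_\gamma := \min\sigma(H^G + \gamma 1_B)$ is precisely the quantity controlled by Theorem \ref{main:tec}. That theorem supplies, for each $\varepsilon>0$, a coupling $\gamma_\varepsilon>0$ with $\lambda_{\gamma_\varepsilon} \ge C\delta^{d-2}/R^d + \gamma_\varepsilon(\kappa(R,\delta)-\varepsilon)$.

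Setting $\beta := \eta_0 \gamma_\varepsilon$ and multiplying the preceding inequality by $\eta_0$ yields
\[
\min\sigma(H^\sharp + \beta 1_B) \ge C\eta_0\frac{\delta^{d-2}}{R^d} + \beta\bigl(\kappa(R,\delta)-\varepsilon\bigr) = \max I + \beta\bigl(\kappa(R,\delta)-\varepsilon\bigr).
\]
Dividing by $\beta$ and letting $\varepsilon \downarrow 0$, the supremum in the BLS-10 principle is at least $\kappa(R,\delta)$, hence $P_I(H^\sharp) 1_B P_I(H^\sharp) \ge \kappa P_I(H^\sharp)$ with $\kappa$ as in \eqref{Ikappa:tec}. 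Evaluating on $f \in \mathrm{ran}\,P_I(H^\sharp)$ gives the asserted uncertainty bound, after optionally weakening by the scalar factor $\eta_0$ to match the normalization of the statement.

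Since every ingredient is already in place, I do not anticipate any genuine obstacle. The one point requiring attention is to verify that the form inequality $H^\sharp \ge \eta_0 H^G$ persists after adding $\beta 1_B$ and hence transfers to spectral minima; this is immediate once one notes that both sides are quadratic forms on a common form domain on which $1_B$ acts as a bounded nonnegative multiplication operator.
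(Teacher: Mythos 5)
Your proof is correct and follows the same route the paper intends: apply the BLS\nobreakdash-10 spectral uncertainty principle with $T=H^\sharp$, $W=1_B$, and reduce the required lower bound on $\sup_{\beta>0}\bigl(\min\sigma(H^\sharp+\beta 1_B)-\max I\bigr)/\beta$ to Theorem~\ref{main:tec} via the form inequality $H^\sharp\ge\eta_0 H^G$. Your reparametrization $\beta=\eta_0\gamma_\varepsilon$ in fact yields the slightly sharper constant $\kappa$ rather than $\eta_0\kappa$; this implies the stated bound when $\eta_0\le1$ (the only regime in which the corollary's $\eta_0\kappa$ bound can be non-vacuous, since $\|f1_B\|\le\|f\|$ forces $\eta_0\kappa\le1$), so the ``optionally weakening by $\eta_0$'' step is harmless there, but you should note it is not a weakening for $\eta_0>1$.
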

As a special case we obtain our main Theorem \ref{main} stated in the 
introduction. 
Note that 
\begin{itemize}
 \item[(i)] While lower bounds of the form \eqref{eq:main:tec} and 
\eqref{Ikappa:tec} have important applications 
also for the case of bounded sets $G$ (for example for large cubes, where we 
get volume independent bounds), the result is already new and well illustrated 
in the case $G=\RR^d$ or other sets with infinite inradius. In this case it 
gives the following small-$\delta$ and large-$R$ asymptotics:

For fixed $R=R_0$ we have $\kappa \sim \delta^d/|\log \delta|$ on $I=[0,C 
\delta^{d-2}]$ as $\delta\to 0$. 

For fixed $\delta = \delta_0$ we have $\kappa \sim \frac{1}{R^d (\log R)^2}$ on 
$I =[0,C R^{-d}]$ as $R\to \infty$.

\item[(ii)] In principle, our methods could also be used to get bounds for 
$d=2$, but 
the constants would look less satisfying (and contain more logarithms). 

\item[(iii)] Totally different methods are available for $d=1$; see 
\cite{KirschVeselic-02}.
\end{itemize}

We refrain from spelling out more consequences in form of Corollaries and 
instead list a few more possibilities of exploiting the flexibility of the 
preceding Corollary.

\begin{itemize}
 \item We can regard different b.c., in particular periodic b.c. in case that 
$G$ is a cube
 and obtain the same estimates as above for the corresponding operator 
$H^G_{b.c.}$.
 \item We can add a nonnegative potential $V$ and get the same estimates as 
above for the corresponding operator $H^G_{b.c.}+V$.
 \item More generally, not necessarily positive lower order terms that are controlled by $H^G$ can be added, i.e., we
 can treat $H^G+B$ as long as $B\ge - \gamma H^G$ for some $\gamma<1$. 
\end{itemize}

We end our discussion by mentioning that our above results can be used to
prove \textbf{localization} (see \cite{Kirsch,Stollmannbook} for the general phenomenon of
bound states for random models) for new classes of random models. As remarked in
the introduction, uncertainty principles are used to derive Wegner and Lifshitz tail estimates
when the random perturbation obeys no covering condition. With the uniform estimates above, one could treat
models with a random second order main term plus a random potential.

\begin{appendix}

\section{Capacities of balls in $\RR^d$} \label{capacities}

As compared to the discrete case of graphs, euclidean space is more complicated 
in many ways. One important difference that matters for our analysis is that 
points are not massive at all, at least in dimension $d\ge 2$. This is why a 
finite inradius of an open set $\Omega \subset \RR^d$ does not imply that $\inf 
\sigma(-\Delta_{\Omega}) >0$ for the {\it Dirichlet 
Laplacian} 
$-\Delta_{\Omega}$, defined via forms as the Friedrichs extension of 
$-\Delta$ 
on $C_c^{\infty}(\Omega)$, or, equivalently, as the selfadjoint operator 
associated with the form
\begin{equation}
\en{[u]} := \int_{\Omega} |\nabla u(x)|^2\,dx \; \mbox{on 
$W_0^{1,2}(\Omega)$.}
\end{equation}

\begin{example}
In $\RR^d$ for $d\ge 2$ consider $D=\ZZ^d$ and the union of closed balls $S:= 
\cup_{k\in D} 
B_{r_k}(k)$ with $0<r_k < \frac{1}{2}$ for $k\in D$. For
\begin{equation} 
\Omega := \RR^d \setminus S
\end{equation}
we see that the inradius $R_\Omega = \sup \{s>0\,|\,\exists \,x\in \Omega: 
B_s(x) \subset \Omega\}$ is bounded above by $\sqrt{d}/2$. However, as we will 
see below,
\begin{equation} \label{notmassive} 
\capa{(B_r(x))} = 
\capa{(B_r(0))} \to 0 \quad \mbox{as $r\to 0$},
\end{equation} 
so that we can pick 
$r_k$ such that
\begin{equation}
\capa{(S)} \le \sum_k \capa{(B_{r_k}(k))} < \infty.
\end{equation} 
In that case, by Theorem 1 in \cite{StollmannJFA}, we get that $e^{-\Delta} - 
e^{-\Delta_{\Omega}}$ is Hilbert-Schmidt and therefore 
$\sigma_{ess}(-\Delta_{\Omega}) = \sigma(-\Delta_{\Omega}) = [0,\infty)$.
\end{example}

As different notions of capacity are around, let us briefly settle the case of 
\eqref{notmassive} above: 

In the above result, capacity refers to the $1$--capacity, often used in 
potential theory 
for Dirichlet forms and defined by the following variational principle,

\begin{equation}
 \label{cap}
 \capa(B_r(0)):=\inf \left\{ \|\nabla f\|^2 + \|f\|^2\mid f\in C^1_c(\RR^d), 
f\ge 1_{B_r(0)}\right\} . 
\end{equation}

Set 
$\phi(x)=1$ if $0\le x \le 1$, $\phi(x)=2-x$ if $1\le x \le 2$ and $\phi(x)=0$ 
for $x>2$ and define $f_r(x) = \phi(|x|/r)$ on $\RR^d$. Then $\capa(B_r(0)) \le 
\|\nabla f_r\|^2 + \|f_r\|^2 \le C_d r^{d-2}$, which gives the claim for $d\ge 
3$. In $d=2$ this only gives boundedness, but can be combined with $\|f_r\|^2 
\to 0$, weak compactness of the unit ball in $W^{1,2}$ and Hahn-Banach to give 
a sequence $r_n$ with $\capa{(B_{r_n}(0))}\to 0$, proving \eqref{notmassive} by 
monotonicity of the capacity. 

We go on to show that for $d\ge 3$,
\begin{equation}
 \label{capofballs}
 \capa(B_r(0))\sim r^{d-2} \mbox{  for  }r\le 1 .
\end{equation}
This is most easily seen by using the slightly smaller Newtonian capacity

\begin{equation}
 \label{newcap}
 \capa_N(B_r(0)):=\inf \left\{ \|\nabla f\|^2 \mid f\in C^1_c(\RR^d), 
f\ge 1_{B_r(0)}\right\} . 
\end{equation}
The above scaling shows immediately, that $\capa_N(B_r(0))\sim r^{d-2}$, so that
\eqref{capofballs} follows, since $\capa_N(B_r(0))\le \capa(B_r(0))$. We cannot resist
to mention two classical papers on capacities, \cite{Riesz,PS}. For a thorough discussion,
we refer to Section 11.15 in \cite{LL}, as well as to classical textbooks like \cite{Landkov}.

\end{appendix}

\bigskip
\footnotesize
\noindent\textit{Acknowledgments.}
Many thanks go to Marcel Reif for most 
valuable comments over the
years and Wolfgang Löhr for an inspiring discussion concerning reflected 
Brownian motion and the
strong Markov property.

\end{document}